\newcommand{\notes}[1]{\textbf{#1}}
\def\y{\boldsymbol{y}}
\def\Y{\boldsymbol{Y}}
\def\x{\boldsymbol{x}}
\def\cnmap{\boldsymbol{\eta}}
\def\nnatpar{{p}}
\def\curvpar{\boldsymbol{\theta}}
\def\curvpars{\boldsymbol{\Theta}}
\def\natcurvpars{\boldsymbol{\Theta}_{\text{N}}}
\def\designpar{\boldsymbol{\psi}}
\def\ncurvpar{q}
\def\genstat{\boldsymbol{g}}
\def\Design{\boldsymbol{D}}
\def\design{\boldsymbol{d}}
\def\dyadvals{\mathbb{S}}
\def\changeij{\boldsymbol{\Delta}\sij}
\def\setsub{\backslash}
\DeclareMathOperator{\E}{E}
\DeclareMathOperator{\Geometric}{Geometric}
\DeclareMathOperator{\RandomChoose}{RandomChoose}
\def\dysY{\mathbb{Y}}
\def\netsY{\mathcal{Y}}
\def\iid{{\stackrel{\mathrm{i.i.d.}}{\sim}}}
\def\ind{{\stackrel{\mathrm{ind.}}{\sim}}}
\DeclareMathOperator{\Poisson}{Poisson}
\DeclareMathOperator{\Uniform}{Uniform}
\DeclareMathOperator{\Prob}{Pr}
\DeclareMathOperator{\Lik}{L}
\def\M{P}
\def\h{h}
\def\Mref{\M_\h}
\def\Mtheg{\M_{\curvpar;\Mref,\cnmap,\genstat}}
\def\sigY{\mathsf{Y}}
\def\Pteg{\Prob_{\curvpar;\cnmap,\genstat}}
\def\Eteg{\E_{\curvpar;\cnmap,\genstat}}
\def\Ptheg{\Prob_{\curvpar;\h,\cnmap,\genstat}}
\def\Etheg{\E_{\curvpar;\h,\cnmap,\genstat}}
\def\interior{^\text{o}}
\def\normc{\kappa}
\def\ceg{\normc_{\cnmap,\genstat}}
\def\cheg{\normc_{\h,\cnmap,\genstat}}
\def\cHeg{\normc_{\Mref,\cnmap,\genstat}}
\DeclareMathOperator{\ilogit}{logit^{-1}}
\def\RR{\mathbb{R}}
\def\NN{\mathbb{N}}
\def\pij{{(i,j)}}
\def\pji{{(j,i)}}
\def\ipjp{{i',j'}}
\def\pipjp{{(i',j')}}
\def\ijdysY{{\pij\in\dysY}}
\def\ipjpdysY{{\pipjp\in\dysY}}
\def\ynetsY{{\y\in\netsY}}
\def\ypnetsY{{\y'\in\netsY}}
\def\sij{_{i,j}}
\def\sijk{_{i,j,k}}
\def\sji{_{j,i}}
\def\sipjp{_{i',j'}}
\def\Yij{\Y\!\sij}
\def\yij{\y\sij}
\def\yipjp{\y\sipjp}
\def\Yji{\Y\!\sji}
\def\yji{\y\sji}
\def\Yyij{\Yij=\yij}
\def\xij{\x\sij}
\def\xijk{\x\sijk}
\def\Yy{\Y=\y}
\def\sobs{_{\text{obs}}}
\def\Yobs{\Y\sobs}
\def\yobs{\y\sobs}
\def\Yyobs{\Yobs=\yobs}
\def\half{\frac{1}{2}}
\newcommand{\natpar}[1][]{\cnmap#1(\curvpar)}
\newcommand{\natparm}[1]{\cnmap(\curvpar#1)}
\newcommand{\myexp}[1]{\exp\left(#1\right)}
\newcommand{\I}[1]{1_{#1}}
\newcommand{\pkg}[1]{\texttt{#1}}
\newcommand{\proglang}[1]{\textsf{#1}}
\providecommand{\abs}[1]{\left\lvert#1\right\rvert}
\def\t{^{\mathsf{T}}}
\newcommand{\innerprod}[2]{{#1}\cdot{#2}}
\def\Mct{\mu}
\def\dthegdref{\frac{d\Mtheg}{d\Mref}}
\def\wergm{valued ERGM}
\def\wergms{valued ERGMs}
\def\Wergms{Valued ERGMs}
\newcommand{\coefsig}[2]{$\mathbf{#1}$ ($#2$)}
\newcommand{\coefnsig}[2]{$#1$ ($#2$)}
\newcommand{\centercell}[1]{\multicolumn{1}{c}{#1}}
\theoremstyle{plain}
\newtheorem{thm}{Theorem}[section]
\title{Exponential-Family Random Graph Models for Valued Networks}
\author{Pavel N. Krivitsky}
\date{}
\begin{document}
\maketitle

\begin{abstract}
  Exponential-family random graph models (ERGMs) provide a principled
  and flexible way to model and simulate features common in social
  networks, such as propensities for homophily, mutuality, and
  friend-of-a-friend triad closure, through choice of model terms
  (sufficient statistics). However, those ERGMs modeling the more
  complex features have, to date, been limited to binary data:
  presence or absence of ties. Thus, analysis of valued networks, such
  as those where counts, measurements, or ranks are observed, has
  necessitated dichotomizing them, losing information and introducing
  biases.

  In this work, we generalize ERGMs to valued networks. Focusing on
  modeling counts, we formulate an ERGM for networks whose ties are
  counts and discuss issues that arise when moving beyond the binary
  case. We introduce model terms that generalize and model common
  social network features for such data and apply these methods to a
  network dataset whose values are counts of interactions.

  \noindent\textbf{Keywords}: {p-star model; transitivity; weighted
    network; count data; maximum likelihood estimation;
    Conway--Maxwell--Poisson distribution}
\end{abstract}
\section{Introduction}

Networks are used to represent and analyze phenomena ranging from
sexual partnerships \citep{morris1997cps}, to advice giving in an
office \citep{lazega1999mge}, to friendship relations
\citep{goodreau2008bff,newcomb1961ap}, to international relations
\citep{ward2007ppi}, to scientific collaboration, and many other
domains \citep{goldenberg2009ssn}.  More often than not, the relations
of interest are not strictly dichotomous in the sense that all present
relations are effectively equal to each other. For example, in sexual
partnership networks, some ties are short-term while others are
long-term or marital; friendships and acquaintance have degrees of
strength, as do international relations; and while a particular
individual seeking advice might seek it from some coworkers but not
others, he or she will likely do it in some specific order and weight
advice of some more than others.

Network data with valued relations come in many forms. Observing
messages \citep{freeman1980scs,diesner2005ecn}, instances of personal
interaction \citep{bernard19791980ias}, or counting co-occurrences or
common features of social actors \citep{zachary1977ifm,batagelj2006pd}
produce relations in the form of counts. Measurements, such as
duration of interaction \citep{wyatt2009dmn} or volume of trade
\citep{westveld2011mem} produce relations in the form of (effectively)
continuous values. Observations of states of alliance and war
\citep{read1954cch} produce signed relationships. Sociometric surveys
often produce ranks in addition to binary measures of affection
\citep{sampson1968npc,newcomb1961ap,bernard19791980ias,harris2003nls}.

Exponential-family random graph models (ERGMs) are generative models
for networks which postulate an exponential family over the space of
networks of interest \citep{holland1981efp,frank1986mg}, specified by
their sufficient statistics \citep{morris2008ser}, or, as with
\citet{frank1986mg}, by their conditional independence structure
leading to sufficient statistics \citep{besag1974sis}. These
sufficient statistics typically embody the features of the network of
interest that are believed to be significant to the social process
which had produced it, such as degree distribution (e.g., propensity
towards monogamy in sexual partnership networks), homophily (i.e.,
\enquote{birds of a feather flock together}), and triad-closure bias
(i.e., \enquote{a friend of a friend is a friend})
. \citep{morris2008ser}

A major limitation of ERGMs to date has been that they have been
applied almost exclusively to binary relations: a relationship between
a given actor $i$ and a given actor $j$ is either present or
absent. This is a serious limitation: valued network data have to be
dichotomized for ERGM analysis, an approach which loses information
and may introduce biases. \citep{thomas2011vtt}

Some extensions of ERGMs to specific forms of valued ties have been
formulated: to networks with polytomous tie values, represented as a
constrained three-way binary array by \citet{robins1999lml} and more
directly by \citeauthor{wyatt2009dmn}
\citeyearpar{wyatt2009dmn,wyatt2010dlr}; to
multiple binary networks by \citet{pattison1999lml}; and the authors
are also aware of some preliminary work by \citet{handcock2006sem} on
ERGMs for signed network data. \citet{rinaldo2009gde} discussed binary
ERGMs as a special case and a motivating application of their
developments in geometry of discrete exponential families.

A broad exception to this limitation has been a subfamily of ERGMs
that have the property that the ties and their values are
stochastically independent given the model parameters. Unlike the
dependent case, the likelihoods for these models have can often be
expressed as generalized linear or nonlinear models, and they tend to
have tractable normalizing constants, which allows them to more easily
be embedded in a hierarchical framework. Thus, to represent common
properties of social networks, such as actor heterogeneity,
triad-closure bias, and clustering, latent class and position models
have been used and extended to valued
networks. \citep{hoff2005bme,krivitsky2009rdd,mariadassou2010uls}

In this work, we generalize the ERGM framework to directly model
valued networks, particularly networks with count dyad values, while
retaining much of the flexibility and interpretability of binary
ERGMs, including the above-described property in the case when tie
values are independent under the model. In Section~\ref{sec:binary} we
review conventional ERGMs and describe their traits that \wergms{}
should inherit. In Section~\ref{sec:count}, we describe the framework
that extends the model class to networks with counts as dyad values
and discuss additional considerations that emerge when each dyad's
sample space is no longer binary. In Section~\ref{sec:inference} we
give some details and caveats of our implementation of these models
and briefly address the issue of ERGM degeneracy as it pertains to
count data. Applying ERGMs requiers one to specify and interpret
sufficient statistics that embody network features of interest, all
the while avoiding undesirable phenomena such as ERGM
degeneracy. Thus, in Section~\ref{sec:ctdata}, we introduce and
discuss statistics to represent a variety of features commonly found
in social networks, as well as features specific to networks of
counts. In Section~\ref{sec:example} we use these statistics to model
social forces that affect the structure of a network of counts of
social contexts of interactions among members of a divided karate club
and a network of counts of conversations among members of a
fraternity. Finally, in Section~\ref{sec:discussion}, we discuss
generalizing ERGMs to other types of valued data.

\section{\label{sec:binary}ERGMs for binary data}
In this section, we define notation, review the (potentially curved)
exponential-family random graph model and identify those of its
properties that we wish to retain when generalizing.

\subsection{Notation and binary ERGM definition}
Let $N$ be the set of actors in the network of interest, assumed known
and fixed for the purposes of this paper, and let $n\equiv\abs{N}$ be
its cardinality, or the number of actors in the network. For the
purposes of this paper, let a \emph{dyad} be defined as a (usually
distinct) pair of actors, ordered if the network of interest is
directed, unordered if not, between whom a relation of interest may
exist, and let $\dysY$ be the set of all dyads. More concretely, if
the network of interest is directed, $\dysY\subseteq N\times N$, and
if it is not, $\dysY\subseteq \{\{i,j\}: \pij \in N\times N\}$. In
many problems, a relation of interest cannot exist between an actor
and itself (e.g., a friendship network), or actors are partitioned into
classes with relations only existing between classes (e.g., bipartite
networks of actors attending events), in which case $\dysY$ is a
proper subset of $N\times N$, excluding those pairs $\pij$ between
which there can be no relation of interest.

Further, let the set of possible networks of interest (the sample
space of the model) $\netsY\subseteq 2^\dysY$, the power set of the
dyads in the network. Then a network $\ynetsY$, can be considered a
set of ties $\pij$. Again, in some problems, there may be additional
constraints on $\netsY$. A common example of such constraints are
degree constraints induced by the survey format
\citep{harris2003nls,goodreau2008bff}.

Using notation similar to that of \citet{hunter2006ice} and
\citet{krivitsky2011ans}, define an exponential family random graph
model to have the form
\begin{equation}
  \Pteg(\Yy|\x)=\frac{\myexp{\innerprod{\natpar}{\genstat(\y;\x)}}}{\ceg(\curvpar;\x)},\ \ynetsY, \label{eq:ergm}
\end{equation}
for random network variable $\Y$ and its realization $\y$; model
parameter vector $\curvpar\in\curvpars$ (for parameter space
$\curvpars\subseteq\RR^\ncurvpar$) and its mapping to canonical
parameters $\cnmap:\curvpars\to\RR^\nnatpar$; a vector of sufficient
statistics $\genstat:\netsY\to\RR^\nnatpar$, which may also depend on
data $\x$, assumed fixed and known; and a normalizing constant (in
$\y$) $\ceg:\RR^\ncurvpar\to\RR$ which ensures that \eqref{eq:ergm}
sum to 1 and thus has the value
\[\ceg(\curvpar;\x)=\sum_{\ypnetsY}\myexp{\innerprod{\natpar}{\genstat(\y';\x)}}.\]
Here, we have given the most general case defined by
\citet{hunter2006ice}: a frequently used special case is
$\ncurvpar=\nnatpar$ and $\natpar=\curvpar$, so the exponential
family is linear. For notational simplicity, we will omit $\x$ for the
remainder of this paper, as $\genstat$ incorporates it implicitly.

\subsection{Properties of binary ERGM}
\subsubsection{\label{sec:binary-changestat}Conditional distributions and change statistics}
\citet{snijders2006nse}, \citet{hunter2008epf},
\citet{krivitsky2011ans}, and others define \emph{change statistics},
which emerge when considering the probability of a single dyad having
a tie given the rest of the network and provide a convenient local
interpretation of ERGMs. To summarize, define the $\nnatpar$-vector of change
statistics
\[\changeij \genstat(\y)\equiv \genstat(\y+\pij)-\genstat(\y-\pij),\]
where $\y+\pij$ is the network $\y$ with edge or arc $\pij$ added if
absent (and unchanged if present) and $\y-\pij$ is the network $\y$
with edge or arc $\pij$ removed if present (and unchanged if absent). Then, through cancellations,
\begin{equation*}
\Pteg(\Yij=1|\Y-\pij =\y-\pij)=\ilogit\left(\innerprod{\natpar}{\changeij \genstat(\y)}\right).
\end{equation*}
It is often the case that the form of $\changeij \genstat(\y)$ is
simpler than that of $\genstat(\y)$ both algebraically and
computationally. For example, the change statistic for edge count
$\abs{\y}$ is simply $1$, indicating that a unit increase in
$\natpar[_{\abs{\y}}]$ will increase the conditional log-odds of a tie
by $1$, while the change statistic for the number of triangles in a
network is $\abs{\y_i\cap\y_j}$, the number of neighbors $i$ and $j$
have in common, suggesting that, a positive coefficient on this
statistic will increase the odds of a tie between $i$ and $j$
exponentially in the number of common neighbors.
\citet{hunter2008epf} and \citet{krivitsky2011ans} offer a further
discussion of change statistics and their uses, and
\citet{snijders2006nse} and \citet{schweinberger2011isd} use them to
diagnose degeneracy in ERGMs. It would be desirable for a
generalization of ERGM to valued networks to facilitate similar local
interpretation.

Furthermore, the conditional distribution serves as the basis for
maximum pseudo-likelihood estimation (MPLE) for these
models. \citep{strauss1990pes}

\subsubsection{\label{sec:logit-reg}Relationship to logistic regression}
If the model has the property of \emph{dyadic independence} discussed
in the Introduction, or, equivalently, the change statistic
$\changeij \genstat(\y)$ is constant in $\y$ (but may vary for
different $\pij$), the model trivially reduces to logistic
regression. In that case, the MLE and the MPLE are
equivalent. \citep{strauss1990pes} Similarly, it may be a desirable
trait for valued generalizations of ERGMs to also reduce to GLM for
dyad-independent choices of sufficient statistics.

\section{\label{sec:count}ERGM for counts}
We now define ERGMs for count data and discuss the issues that arise
in the transition.
\subsection{Model definition}
Define $N$, $n$, and $\dysY$ as above. Let $\NN_0$ be the set of
natural numbers and 0. Here, we focus on counts with no \emph{a
  priori} upper bound --- or counts best modeled thus. Instead of
defining the sample space $\netsY$ as a subset of a power set, define
it as $\netsY\subseteq \NN_0^\dysY$, a set of mappings that assign to
each dyad $\ijdysY$ a count. Let $\yij=\y\pij \in \NN_0$ be the value
associated with dyad $\pij$.

A (potentially curved) ERGM for a random network of counts
$\Y\in\netsY$ then has the pmf
\begin{equation}
  \Ptheg(\Yy)=\frac{h(\y)\myexp{\innerprod{\natpar}{\genstat(\y)}}}{\cheg(\curvpar)},\label{eq:ctergm}
\end{equation}
 where the normalizing constant
\[\cheg(\curvpar)=\sum_\ynetsY h(\y)\myexp{\innerprod{\natpar}{\genstat(\y)}},\]
with $\cnmap$, $\genstat$, and $\curvpar$ defined as above, and
\begin{equation}
  \curvpars\subseteq\natcurvpars=\{\curvpar'\in\RR^\ncurvpar:\cheg(\curvpar')<\infty\}\label{eq:ctergm-Theta}
\end{equation}
\citetext{\citealp[pp.~115--116]{barndorffnielsen1978ief};
  \citealp[pp.~1--2]{brown1986fse}}, with $\natcurvpars$ being the
natural parameter space.  Notably, constraint \eqref{eq:ctergm-Theta}
is trivial for binary networks, since their sample space is finite,
whereas for counts \eqref{eq:ctergm-Theta} may constitute a relatively
complex constraint.

For the remainder of this paper, we will focus on linear ERGMs, so
unless otherwise noted, $\nnatpar=\ncurvpar$ and
$\natpar\equiv\curvpar$.

\subsection{Reference measure}
In addition to the specification of the sufficient statistics
$\genstat$ and, for curved families, mapping $\cnmap$ of model
parameters to canonical parameters, an ERGM for counts depends on the
specification of the function $\h:\netsY\to [0,\infty)$. Formally,
along with the sample space, it specifies the reference measure: the
distribution relative to which the exponential form is specified. For
binary ERGMs, $\h$ is usually not specified explicitly, though in some
ERGM applications, such as models with offsets \citep[for
example]{krivitsky2011ans} and profile likelihood calculations of
\citet{hunter2008epf}, the terms with fixed parameters are implicitly
absorbed into $\h$.

For valued network data in general, and for count data in particular,
specification of $\h$ gains a great deal of importance, setting the
baseline shape of the dyad distribution and constraining the parameter
space. Consider a very simple $\nnatpar=1$ model with
$\genstat(\y)=(\sum_{\ijdysY}\yij)$, the sum of all dyad values. If
$\h(\y)=1$ (i.e., discrete uniform), the resulting family has the pmf
\[
  \Ptheg(\Yy)=\frac{\myexp{\curvpar\sum_{\ijdysY}\yij}}{\cheg(\curvpar)}=\prod_{\ijdysY}\frac{\myexp{\curvpar\yij}}{1-\myexp{\curvpar}},
\]
giving the dyadwise distribution $\Yij\iid
\Geometric(p=1-\myexp{\curvpar})$, with $\curvpar<0$ by
\eqref{eq:ctergm-Theta}.  On the other hand, suppose that, instead,
$\h(\y)=\prod_{\ijdysY}(\yij!)^{-1}$. Then,
\[
  \Ptheg(\Yy)=\frac{\myexp{\curvpar\sum_{\ijdysY}\yij}}{\cheg(\curvpar)\prod_{\ijdysY}\yij!}=\prod_{\ijdysY}\frac{\myexp{\curvpar\yij}}{\yij!\myexp{\curvpar}},
\]
giving $\Yij\iid\Poisson(\mu=\myexp{\curvpar})$, with
$\natcurvpars=\RR$. The shape of the resulting distributions for a
fixed mean is given in Figure~\ref{fig:h-effect}.

\begin{figure}
  \begin{center}
    \includegraphics[width=0.7\textwidth]{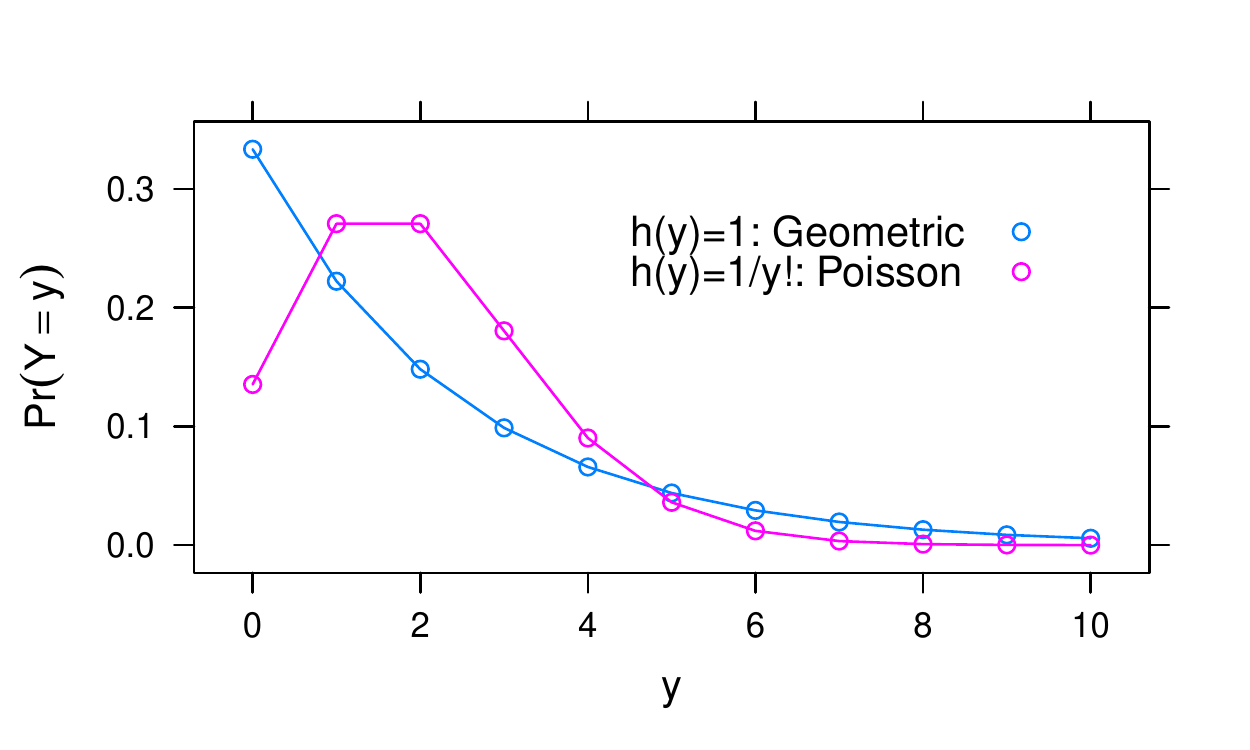}
  \end{center}
  \caption{\label{fig:h-effect}Effect of $\h$ on the shape of
    the distribution. (The mean is fixed at 2.) }
\end{figure}
The reference measure $\h$ thus determines the support and the basic
shape of the ERGM distribution. For this reason, we define a
\emph{geometric-reference ERGM} to have the form \eqref{eq:ctergm}
with $\h(\y)=1$ and a \emph{Poisson-reference ERGM} to have
$\h(\y)=\prod_{\ijdysY}(\yij!)^{-1}$.

Note that this does not mean that any Poisson-reference ERGM will, even under
dyadic independence, be dyadwise Poisson. We discuss the sufficient
conditions for this in Section~\ref{sec:pois-reg}.


\section{\label{sec:inference}Inference and implementation}
As exponential families, \wergms{}, and ERGMs for counts in
particular, inherit the inferential properties of discrete exponential
families in general and binary ERGMs in particular, including
calculation of standard errors and analysis of deviance. They also
inherit the caveats. For example, the Wald test results based on
standard errors depend on asymptotics which are questionable for ERGMs
with complex dependence structure \citep{hunter2006ice}, so we confirm
the most important of the results using a simple Monte Carlo test: we
fit a nested model without the statistic of interest and simulate its
distribution under such a model. The quantile of the observed value of
the statistic of interest can then be used as a more robust $P$-value.

At the same time, generalizing ERGMs to counts raises additional
inferential issues. In particular, the infinite sample space of counts
means that the constraint \eqref{eq:ctergm-Theta} is not always
trivially satisfied, which results in some \wergm{} specifications not
fulfilling regularity conditions. We give an example of this in
Section~\ref{sec:dispersion} and Appendix~\ref{app:CMP-steep}. These
issues also lead to additional computational issues.

\subsection{Computational issues}
The greatest practical difficulty associated with likelihood inference
on these models is usually that the normalizing constant
$\cheg(\curvpar)$ is intractable, its exact evaluation requiring
integration over the sample space $\netsY$. However, the
exponential-family nature of model also means that, provided a method
exists to simulate realizations of networks from the model of interest
given a particular $\curvpar$, the methods of \citet{geyer1992cmc} for
fitting exponential families with intractable normalizing constants
and, more specifically, their application to ERGMs by
\citet{hunter2006ice}, may be used. These methods rely on network
sufficient statistics rather than networks themselves and can thus be
used with little modification. More concretely, the ratio of two
normalizing constants evaluated at $\curvpar'$ and $\curvpar$ can be
expressed as
\begin{align*}
  \frac{\cheg(\curvpar')}{\cheg(\curvpar)}&=\frac{\sum_\ynetsY \h(\y)\myexp{\innerprod{\natparm{'}}{\genstat(\y)}}}{\cheg(\curvpar)}\\
  &=\frac{\sum_\ynetsY \h(\y)\myexp{\innerprod{\left(\natparm{'}-\natpar\right)}{\genstat(\y)}}\myexp{\innerprod{\natpar}{\genstat(\y)}}}{\cheg(\curvpar)}\\
  &=\sum_\ynetsY \myexp{\innerprod{\left(\natparm{'}-\natpar\right)}{\genstat(\y)}}\frac{\h(\y)\myexp{\innerprod{\natpar}{\genstat(\y)}}}{\cheg(\curvpar)}\\
  &=\Etheg\left(\myexp{\innerprod{\left(\natparm{'}-\natpar\right)}{\genstat(\Y)}}\right),
\end{align*}
so given a sample $\Y^{(1)},\dotsc,\Y^{(S)}$ from an initial guess
$\curvpar$, it can be estimated
\[\frac{\cheg(\curvpar')}{\cheg(\curvpar)}\approx\sum_{s=1}^S\myexp{\innerprod{\left(\natparm{'}-\natpar\right)}{\genstat(\Y^{(s)})}}.\]

Another method for fitting ERGMs, taking advantage of the equivalence
of the method of moments to the maximum likelihood estimator for
linear exponential families, was implemented by
\citet{snijders2002mcm}, using the algorithm by \citet{robbins1951sam}
for simulated statistics to fit the model. This approach also
trivially extends to \wergms.

Furthermore, because the normalizing constant (if it is finite) is
thus accommodated by the fitting algorithm, we may focus on the
unnormalized density for the purposes of model specification and
interpretation. Therefore, for the remainder of this paper, we specify
our models up to proportionality, as \citet{geyer1999lis} suggests.

That \eqref{eq:ctergm-Theta} is not trivially satisfied for all
$\curvpar\in \RR^q$ presents an additional computational challenge:
even for relatively simple network models, $\curvpars$ may have a
nontrivial shape. For example, even a simple geometric-reference ERGM
\[
\Ptheg(\Yy)\propto\prod_{\ijdysY}\myexp{\innerprod{\curvpar}{\xij \yij}},
\]
a geometric GLM with a covariate $\nnatpar$-vector $\xij$,
has parameter space
\[\curvpars=\{\curvpar'\in \RR^\nnatpar:\forall_{\ijdysY}\curvpar\cdot\xij<0\},\]
an intersection of up to $\abs{\dysY}$ half-spaces (linear
constraints). Models with complex dependence structure may have less
predictable parameter spaces, and, due to the nature of the algorithm
of \citet{hunter2006ice}, the only general way to detect whether a
guess for $\curvpar$ had strayed outside of $\curvpars$ may be by
diagnostics on the simulation. Bayesian inference with improper priors
faces a similar problem, and addressing it in the context of ERGMs is
a subject for future work. For this paper, we focus on models in which
parameter spaces are provably unconstrained or have very simple
constraints.

We base our implementation on the \proglang{R} package \pkg{ergm} for
fitting binary ERGMs. \citep{handcock2010epf} The design of that
package separates the specification of model sufficient statistics
from the specification of the sample space of networks
\citep{hunter2008epf}, and so we implement our models by substituting
in a Metropolis-Hastings sampler that implements our $\netsY$ and $\h$
of interest. (A simple sampling algorithm for realizations from a
Poisson-reference ERGM, optimized for zero-inflated data, is described
in Appendix~\ref{app:pois-MH}.)  This implementation will be
incorporated into a future public release of \pkg{ergm}.

\subsection{\label{sec:degeneracy}Model degeneracy}
Application of ERGMs has long been associated with a complex of
problems collectively referred to as
\enquote{degeneracy}. \citep{handcock2003ads,rinaldo2009gde,schweinberger2011isd}
\citeauthor{rinaldo2009gde}, in particular, list three specific,
interrelated, phenomena: 1) when a parameter configuration --- even
the MLE --- induces a distribution for which only a small number of
possible networks have non-negligible probabilities, and these
networks are often very different from each other (e.g., a
sparser-than-observed graph and a complete graph) for an effectively
bimodal distribution; 2) when the MLE is hard to find by the available
MCMC methods; and 3) when the probability of the observed network
under the MLE is relatively low --- the observed network is,
effectively, between the modes. This bimodality and concentration is
often a consequence of the model inducing overly strong positive
dependence among dyad values. For example, \citet{snijders2006nse} use
change statistics to show that under models with positive coefficients
on triangle and $k$-star ($k\ge 2$) counts --- the classic
\enquote{degenerate} ERGM terms --- every tie added to the network
increases the conditional odds of several other ties and does not
decrease the odds of any, creating what \citeauthor{snijders2006nse}
call an \enquote{avalanche} toward the complete graph, which emerges
as by far the highest-probability realization. (More concretely, under
a model with a triangle count with coefficient $\curvpar_\triangle$,
adding a tie $\pij$ increases the conditional odds of as many ties as
$i$ and $j$ have neighbors by $\myexp{\curvpar_\triangle}$.) Adjusting
other parameters, such as density, down to obtain the expected level
of sparsity close to that of the observed graph merely induces the
bimodal distribution of Phenomenon~1.

An infinite sample space makes Phenomenon~1, as such, unlikely,
because the \enquote{avalanche} does not have a maximal graph in which
to concentrate. However, it does not preclude excessive dependence
inducing a bimodal distribution at the MLE, even if neither mode is
remotely degenerate in the probabilistic sense. The observed network
being between these modes, this may lead to Phenomenon~3, and, due
to the nature of the estimation algorithms, such a situation may,
indeed, lead to failing estimation --- Phenomenon~2.

In this work, we seek to avoid this problem by constructing statistics
that prevent the \enquote{avalanche} by limiting dependence or
employing counterweights to reduce it. (An example of the former
approach is the modeling of transitivity in
Section~\ref{sec:triad-closure}, and an example of the latter is the
centering in the within-actor covariance statistic developed in
Section~\ref{sec:heterogeneity}.)  Formal diagnostics developed to
date, such as those of \citet{schweinberger2011isd} do not appear to
be directly applicable to models with infinite sample spaces, so we rely
on MCMC diagnostics \citep{goodreau2008st} instead.

\section{\label{sec:ctdata}Statistics and interpretation for count data}
In this section, we develop sufficient statistics for count data to
represent network features that may be of interest and discuss their
interpretation. In particular, unless otherwise noted, we focus on the
Poisson-reference ERGM without complex constraints:
$\netsY=\NN_0^\dysY$ and $\h(\y)=\prod_{\ijdysY}(\yij!)^{-1}$.

\subsection{Interpretation of model parameters}
The sufficient statistics of the binary ERGMs and \wergms{} alike
embody the structural properties of the network that are of
interest. The tools available for interpreting them are similar as
well.

\subsubsection{\label{sec:expectation}Expectations of sufficient statistics}
In a linear ERGM, if $\natcurvpars$ is an open set, then, for every
$k\in 1..\nnatpar$, and holding $\curvpar_{k'}$, $k'\ne k$, fixed, it
is a general exponential family property that the expectation
$\Etheg(\genstat_k(\Y))$ is strictly increasing in
$\curvpar_k$. \citep[pp. 120--121]{barndorffnielsen1978ief} Thus, if
the statistic $\genstat_k$ is a measurement of some feature of
interest of the network (e.g., magnitude of counts, interactions
between or within a group, isolates, triadic structures), a greater
value of $\curvpar_k$ results in a distribution of networks with more
of the feature measured by $\genstat_k$ present.

\subsubsection{\label{sec:disc-changestat}Discrete change statistic and conditional distribution}
Binary ERGM statistics have a \enquote{local} interpretation in the
form of \emph{change statistics} summarized in
Section~\ref{sec:binary-changestat}, we describe similar tools for
\enquote{local} interpretation of ERGMs for counts here.

Define the set of networks
\[\netsY\sij(\y)\equiv\{\y'\in\netsY:\forall_{\ipjp\in \dysY\setsub\{\pij\}}\y'_\ipjp=\y_\ipjp\}.\]
That is, $\netsY\sij(\y)$ is the set of networks such that all dyads
but the focus dyad $\pij$ are fixed to their values in $\y$ while
$\pij$ itself may vary over its possible values; and define
$\y_{\pij=k}\in \{\y'\in\netsY\sij(\y): \y'\sij=k\}$ (a singleton set)
to be the network with non-focus dyads fixed and focus dyad set to
$k$. Then, let the \emph{discrete change statistic}
\[\changeij^{k_1\to k_2}\genstat(\y)\equiv\genstat(\y_{\pij=k_2})-\genstat(\y_{\pij=k_1}).\]

This statistic emerges when taking the ratio of probabilities of two
networks that are identical except for a single dyad value:
\begin{align*}
\frac{\Ptheg(\Yij=\y_{\pij=k_2}|\Y\in\netsY\sij(\y))}{\Ptheg(\Yij=\y_{\pij=k_1}|\Y\in\netsY\sij(\y))}
&=\frac{h\sij(k_2)}{h\sij(k_1)}\myexp{\innerprod{\curvpar}{\changeij^{k_1 \to k_2}(\y)}},
\end{align*}
where $\h\sij:\NN_0\to \RR$ is the component of $\h$ associated with
dyad $\pij$, such that $\h(\y)\equiv \prod_\ijdysY \h\sij(\yij)$, if
it can be thus factored. For a Poisson-reference ERGM,
$h\sij(k)=(k!)^{-1}$. This may be used to assess the effect of a
particular ERGM term on the \emph{decay rate} of the ratios of
probabilities of successive values of dyads \citep{shmueli2005udf} and
on the shape of the dyadwise conditional distribution: the conditional
distribution of a dyad $\ijdysY$, given all other dyads
$\ipjpdysY\setsub\{\pij\}$,
\begin{align*}
\Ptheg(\Yyij|\Y\in\netsY\sij(\y))
&=\frac{h\sij(\yij)\myexp{\innerprod{\curvpar}{\genstat(\y)}}}{\sum_{\y'\in \netsY\sij(\y)} \h(\y'\sij)\sij\myexp{\innerprod{\curvpar}{\genstat(\y')}}}\\
&=\frac{h\sij(\yij)\myexp{\innerprod{\curvpar}{\changeij^{k_0\to \yij}\genstat(\y)}}}{\sum_{k\in\NN_0} \h\sij(k)\myexp{\innerprod{\curvpar}{\changeij^{k_0\to k}\genstat(\y)}}},
\end{align*}
for an arbitrary baseline $k_0$.

\subsection{\label{sec:ctstats}Model specification statistics}
We now propose some specific model statistics to represent common network
structural properties and distributions of counts.

\subsubsection{\label{sec:pois-reg}Poisson modeling}
We begin with statistics that produce Poisson-distributed dyads and
model network phenomena that can be represented in a dyad-independent
manner. As a binary ERGM reduces to a logistic regression model under
dyadic independence, a Poisson-reference ERGM may reduce to a Poisson
regression model.

In a Poisson-reference ERGM, the normalizing constant has a simple
closed form if $\genstat(\y')$ is linear in $\y'\sij$ and does not
depend on any other dyads $\y'\sipjp$, $\pipjp\ne\pij$:
\begin{equation}
  \forall_\ynetsY\forall_{\y'\sij\in \NN_0}\changeij^{0\to \y'\sij}\genstat(\y)=\y'\sij\xij \label{eq:cond-pois-req-lin}.
\end{equation}
for $\xij\equiv \changeij^{k\to k+1}\genstat(\y)$ for any $k\in
\NN_0$. Then,
\[\Yij \ind \Poisson\left(\mu=\myexp{\innerprod{\curvpar}{\changeij^{0\to 1}\genstat(\y)}}\right),\]
giving a Poisson log-linear model, and $\changeij^{0\to 1}\genstat$
effectively becomes the covariate vector for $\Yij$. (If
$\genstat(\y')$ is linear in $\y'\sij$ but does depend on other dyads
--- $\xij$ in \eqref{eq:cond-pois-req-lin} depends on $\y'\sipjp$ but
not on $\y'\sij$ itself --- the dyad distribution is conditionally
Poisson but not marginally so. An example of this arises in Section~\ref{sec:mutuality}.)

\citet{morris2008ser} describe many dyad-independent sufficient
statistics for binary ERGMs. All of them have the general form
\begin{equation}\genstat_k(\y) \equiv \sum_\ijdysY \yij\xijk,\label{eq:poisreg-stat}\end{equation}
where $\xijk\equiv\changeij\genstat_k$ and $\xijk$ may be viewed as
exogenous (to the model) covariates in a logistic regression for each
tie. They could then be used to model a variety of patterns for degree
heterogeneity and mixing among actors over (assumed) exogenous
attributes. For example, for a uniform homophily model, $\xijk$ may be
an indicator of whether $i$ and $j$ belong to the same group. If
$\yij$ are counts, these statistics induce a Poisson regression type
model (for a Poisson-reference ERGM), where the effect of a unit
increase in some $\curvpar_k$ on dyad $\pij$ is to increase its
expectation by a factor of $\myexp{\xijk}$. \citet{krivitsky2009rdd}
use this type of model Slovenian periodical \enquote{co-readerships}
\citep{batagelj2006pd} --- numbers of readers who report reading each
pair of periodicals of interest --- using as exogenous covariates the
class of periodical (daily, weekly, regional, etc.) and the overall
readership levels of each periodical.

Curved (i.e., $\natpar\ne\curvpar$, $\nnatpar>\ncurvpar$, and
$\cnmap$ not a linear mapping) ERGMs, in which the $\genstat$ satisfy
\eqref{eq:cond-pois-req-lin} and dyadic independence, may induce
nonlinear Poisson regression. An example of this is the likelihood
component of some latent space network models, with latent space
positions being treated as free parameters: for example, the
likelihoods of the Poisson models of \citet{hoff2005bme} and
\citet{krivitsky2009rdd} are special cases of such an ERGM, with
$\natpar=\left(\cnmap\sij(\curvpar)\right)_\ijdysY$ and
$\genstat(\y)=\left(\yij\right)_\ijdysY$ (i.e., the sufficient
statistic is the network), and $\cnmap\sij(\curvpar)$ mapping latent
space positions and other parameters contained in $\curvpar$ to the
logarithms of dyad means (i.e., the dyadwise canonical parameters).

\subsubsection{\label{sec:count-thresholded}Zero modification}
We now turn to model terms that may reshape the distribution of the
counts away from Poisson. Social networks tend to be sparse, and
larger networks of similar nature tend to be more sparse
\citep{krivitsky2011ans}. If the interactions among the actors are
counted, it is often the case that if two actors interact at all, they
interact multiple times. This leads to dyadwise distributions that are
zero-inflated relative to Poisson.

These features of sparsity can be modeled using statistics developed
for binary ERGMs, applied to a network produced by thresholding the
counts (at $1$, for zero-modification). For example, a
Poisson-reference ERGM with $\nnatpar=2$ and
\[\genstat(\y)=\left(\sum_{\ijdysY}\yij,\sum_{\ijdysY}\I{\yij>0}\right)\t\]
has dyadwise distribution
\begin{align*}
  \Ptheg(\Yy)
  &\propto \prod_{\ijdysY}\myexp{\curvpar_1\yij+\curvpar_2\I{\yij>0}}/\yij!.
\end{align*}
This is a parametrization of a zero-modified Poisson distribution
\citep{lambert1992zip}, though not a commonly used one, with the
probability of $0$ being
$(1+\myexp{\curvpar_2}(\myexp{\myexp{\curvpar_1}}-1))^{-1}$ and
nonzero values being distributed (conditionally on not being $0$)
$\Poisson(\mu=\myexp{\curvpar_1})$, both reducing to $\Poisson$'s when
$\curvpar_2=0$. Notably, the probability of $0$ decreases as
$\curvpar_1$ increases, rather than being solely controlled by
$\curvpar_2$.

\subsubsection{\label{sec:dispersion}Dispersion modeling}
Consider the social network of face-to-face conversations among people
living in a region. A typical individual will likely not interact at
all with vast majority of others, have one-time or infrequent
interaction with a large number of others (e.g., with clerks or
tellers), and a lot of interaction with a relatively small number of
others (e.g., family, coworkers). Some of this may be accounted for by
information about social roles and preexisting relationships, but if
such information is not available, this leads to a highly
overdispersed distribution relative to Poisson, or even zero-inflated
Poisson. Overdispersed counts are often modeled using the
negative binomial distribution. \citep[p. 199]{mccullagh1989glm}
However, the negative binomial distribution with an unknown dispersion
parameter is not an exponential family, making it difficult to fit
using our inference techniques. We thus discuss two purely
exponential-family approaches for dealing with non-Poisson-dispersed
interaction counts in general and overdispersed counts in particular.

\paragraph{Conway--Maxwell--Poisson Distribution}
Conway--Maxwell--Poisson (CMP) distribution \citep{shmueli2005udf} is an
exponential family for counts, able to represent both under- and
overdispersion: adding a sufficient statistic of the form
\begin{equation}
\genstat_{\text{CMP}}(\y)=\sum_{\ijdysY}\log(\yij!), \label{eq:CMP-term}
\end{equation}
to a Poisson-reference ERGM otherwise fulfilling conditions for
Poisson regression described in Section~\ref{sec:pois-reg} turns a
Poisson regression model for dyads into a CMP regression model.

Its coefficient, $\curvpar_{\text{CMP}}$, constrained by
\eqref{eq:ctergm-Theta} to $\curvpar_{\text{CMP}}\le 1$, controls the
degree of dispersion: $\curvpar_{\text{CMP}}=0$ retains the Poisson
distribution; $\curvpar_{\text{CMP}}<0$ induces underdispersion
relative to Poisson, approaching the Bernoulli distribution as
$\curvpar_{\text{CMP}}\to -\infty$; and $\curvpar_{\text{CMP}}>0$
induces overdispersion, attaining the geometric distribution at
$\curvpar_{\text{CMP}}= 1$, its most overdispersed point.

Normally, the greatest hurdle associated with using CMP is that its
normalizing constant does not, in general, have a known closed
form. In our case, because intractable normalizing constants are
already accommodated by the methods of Section~\ref{sec:inference}, so
using CMP in this setting requires no additional effort.

At the same time, CMP is neither regular nor steep (per
Appendix~\ref{app:CMP-steep}), so the properties of its estimators are
not guaranteed, particularly for highly overdispersed data. We have
found experimentally that counts as dispersed as geometric
distribution or more so often cause the fitting methods of
Section~\ref{sec:inference} to fail.

\paragraph{Variance-like parameters}
Some control over the variance can be attained by adding a statistic
of the form $\genstat(\y)=\sum_{\ijdysY}\yij^a$, $a\ne 1$. Statistics
with $a>1$, such as $\genstat(\y)=\sum_{\ijdysY}\yij^2$, suffer the
same problem as a Strauss point process \citep{kelly1976nsm}: for any
$\theta,\epsilon>0$, $\lim_{y\to\infty} \exp(\theta
y^{1+\epsilon})/y!=\infty$, leading to \eqref{eq:ctergm-Theta}
constraining $\theta\le 0$, able to represent only
\emph{underdispersion}.

Thus, we propose to model dispersion by adding a statistic of the form
\begin{equation}
\genstat(\y)=\sum_{\ijdysY}\yij^{1/2}=\sum_{\ijdysY}\sqrt{\yij}. \label{eq:sqrtsum}
\end{equation}
To the extent that the counts are Poisson-like, the square root is a
variance-stabilizing transformation
\citep[p. 196]{mccullagh1989glm}. Then, a model with $\nnatpar=2$ and
dyadwise sufficient statistic
\begin{equation}
\genstat(\y)=\left(\sum_{\ijdysY}\sqrt{\yij}, \sum_{\ijdysY}\yij\right)\t \label{eq:sqrtsum-model}
\end{equation}
may be viewed as a modeling the first and second moments of
$\sqrt{\yij}$. That the highest-order term is still on the order of
$\yij$ guarantees that $\curvpars=\RR^\nnatpar$ --- a practical
advantage over CMP.

As with CMP, the normalizing constant is intractable. To explore the
shape of this distribution, we fixed $\curvpar_1$ at each of a range
of values and found $\curvpar_2$s such that the induced distribution
had the expected value of $1$. We then simulated from the fit. The
estimated pmf for each configuration and the comparison with the
geometric distribution with the same expectation is given in
Figure~\ref{fig:sqrt-model}. Smaller coefficients on
\eqref{eq:sqrtsum} ($\curvpar_1$) correspond to greater dispersion,
with coefficients on dyad sum ($\curvpar_2$) increasing to compensate,
and vice versa, with $\curvpar_1=0$ corresponding to a Poisson
distribution. As the dispersion increases, the mean is preserved in
part by increasing $\Prob(\Yij=0)$ and, for sufficiently high values
of $\yij$, the geometric distribution still dominates. Thus, there is
a trade-off between the convenience of a model without complex
constraints on the parameter space and the ability to model greater
dispersion. In practice, if the substantive reasons for overdispersion
are due to unaccounted-for heterogeneity, the latter might not be a
serious disadvantage, and excess zeros can be compensated for by a
term from Section~\ref{sec:count-thresholded}.
\begin{figure}
  \includegraphics[width=1\textwidth]{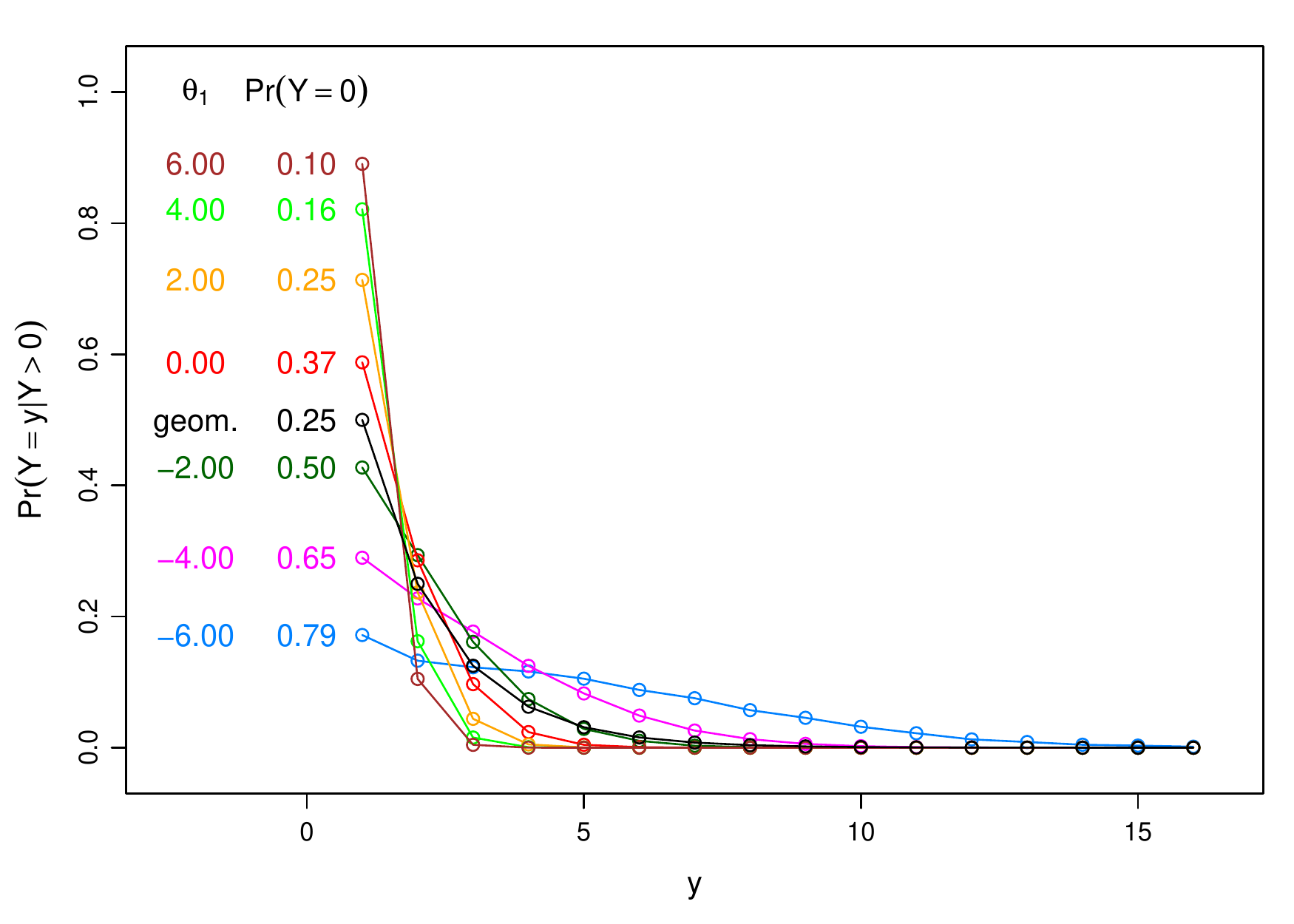}
  \caption{\label{fig:sqrt-model}Dyadwise distributions attainable by
    the model \eqref{eq:sqrtsum-model}. Because $\Prob(Y=0)$ varies
    greatly for different $\curvpar_1$ yet can be adjusted separately
    by an appropriate model term, we plot the probabilities
    conditional on $Y>0$.}
\end{figure}

\subsubsection{\label{sec:mutuality}Mutuality}
Many directed networks, such as friendship nominations, exhibit
\emph{mutuality} --- that, other things being equal, if a tie $\pij$
exists, a tie $\pji$ is more likely to exist as well --- and binary
ERGMs can model this phenomenon using a sufficient statistic
$\genstat(\y)=\sum_{\ijdysY,i<j} \yij\yji =
\sum_{\ijdysY,i<j}\min(\yij,\yji)$, counting the number of
reciprocated ties. \citep{holland1981efp} Other sufficient statistics
that can model it include $\genstat(\y)=\sum_{\ijdysY,i<j}
\I{\yij\ne\yji}$ and $\genstat(\y)=\sum_{\ijdysY,i<j} \I{\yij=\yji}$,
the counts of asymmetric and symmetric dyads,
respectively. \citep{morris2008ser}

In the presence of an edge count term, these three are simply
different parametrizations of the same distribution family:
\[\yij\yji = \frac{(\yij+\yji)-\I{\yij\ne\yji}}{2} = \frac{(\yij+\yji)-1+\I{\yij=\yji}}{2}.\]
Nevertheless, these three different statistics suggest two major ways
to generalize the terms to count data: by evaluating a product or a
minimum of the values, or by evaluating their similarity or
difference. We discuss them in turn.

\paragraph{Product}
It is tempting to model mutuality for count data in the same manner as
for binary data, with $\yij$ and $\yji$ being values rather than
indicators. For example, a simple model with overall dyad mean and
reciprocity terms, with $\nnatpar=2$ and
\[\genstat(\y)=\left(\sum_{\ijdysY}\yij,\sum_{\ijdysY,i<j}\yij\yji\right)\t\]
would have a conditional Poisson distribution:
\begin{align*}
\Yyij|\Y\in\netsY\sij(\y)
&\sim \Poisson\left(\mu=\myexp{\curvpar_1 + \curvpar_2 \yji}\right),
\end{align*}
a desirable property. However, because for any $c>0$,
$\lim_{y\to\infty} \exp(c y^2)/(y!)^2=\infty$, for $\curvpar_2>0$,
representing positive mutuality, \eqref{eq:ctergm-Theta} is not
fulfilled. (Note that the expected value of $\Yij$ is
\emph{exponential} in the value of $\Yji$ and vice versa. Again, a
Strauss point process exhibits a similar
problem. \citep{kelly1976nsm})

\paragraph{Geometric mean}
As with dispersion, the problem can be alleviated by using the
geometric mean of $\yij$ and $\yji$ instead of their product. As in
Section~\ref{sec:dispersion}, this choice may be justified as an
analog of covariance on variance-stabilized
counts. 
This changes the shape of the distribution in ways that are difficult
to interpret: if
\[\genstat(\y)=\left(\sum_{\ijdysY}\yij,\sum_{\ijdysY,i<j}\sqrt{\yij\yji}\right)\t,\] then
\[\Ptheg(\Yyij|\Y\in\netsY\sij(\y))\propto\myexp{\curvpar_1\yij+(\curvpar_2\sqrt{\yji})\sqrt{\yij}}/\yij!,\]
and, with nonzero $\yji$, the probabilities of greater values of
$\Yij$ are inflated by more. The analogy to covariance further
suggests centering the statistic:
\[\genstat(\y)=\sum_{\ijdysY,i<j}(\sqrt{\yij}-\overline{\sqrt{\y}})(\sqrt{\yji}-\overline{\sqrt{\y}}),\]
for
\begin{equation}
\overline{\sqrt{\y}}=\frac{1}{\abs{\dysY}}\sum_{\ipjpdysY}\sqrt{\yipjp}. \label{eq:mean-sqrt}
\end{equation}

\paragraph{Minimum}
An alternative generalization is to take the minimum of the two
values. For example,
if \[\genstat(\y)=\left(\sum_{\ijdysY}\yij,\sum_{\ijdysY,i<j}\min(\yij,\yji)\right)\t,\]
then
\begin{equation}\Ptheg(\Yyij|\Y\in\netsY\sij(\y))\propto\myexp{\curvpar_1\yij+\curvpar_2\min(\yij-\yji,0)}/\yij!.\label{eq:min-model}\end{equation}
Thus, a possible interpretation for this term is that the conditional
probability for a particular value of $\Yij$, $\yij$ is deflated by
$\myexp{\curvpar_2}$ for every unit by which $\yij$ is less than
$\yji$. In a sense, $\yji$ \enquote{pulls up} $\yij$ to its level and
vice versa.

\paragraph{Negative difference}
Generalizing the concept of similarity between $\yij$ and $\yji$ leads
to a statistic of difference between their values. We negate it so
that a positive coefficient value leads to greater mutuality. Then,
\begin{equation}\genstat(\y)=\left(\sum_{\ijdysY}\yij,\sum_{\ijdysY,i<j}-\abs{\yij-\yji}\right)\t,\label{eq:negdiff-model}\end{equation}
and 
\[\Ptheg(\Yyij|\Y\in\netsY\sij(\y))\propto\myexp{\curvpar_1\yij-\curvpar_2\abs{\yij-\yji}}/\yij!,\]
so the conditional probability of a particular $\yij$ is deflated
by $\myexp{\curvpar_2}$ for every unit difference from $\yji$, in
either direction. Thus, $\yji$ \enquote{pulls in} $\yij$ and vice
versa. Of course, other differences (e.g., squared difference) are also
possible.

We use the discrete change statistic to visualize the differences
between these variants in Figure~\ref{fig:mutuality-eff}, plotting the
$\curvpar_{\leftrightarrow}\changeij^{0\to\yij}\genstat_{\leftrightarrow}(\y)$
summand of
\[\log\frac{\Ptheg(\Yyij|\Y\in\netsY\sij(\y))}{\Ptheg(\Yij=0|\Y\in\netsY\sij(\y))}=\curvpar\cdot\changeij^{0\to \yij}\genstat(\y)\]
for each variant. Lastly, while the conditional distributions, and
hence the parameter interpretations for the minimum and the negative
difference statistic, are different, models induced by
\eqref{eq:min-model} and \eqref{eq:negdiff-model} are also
reparametrizations of each other: $\min(\yij,\yji) =
\half\left((\yij+\yji)-\abs{\yij-\yji}\right)$.
\begin{figure}
\begin{center}
  \includegraphics[scale=0.8]{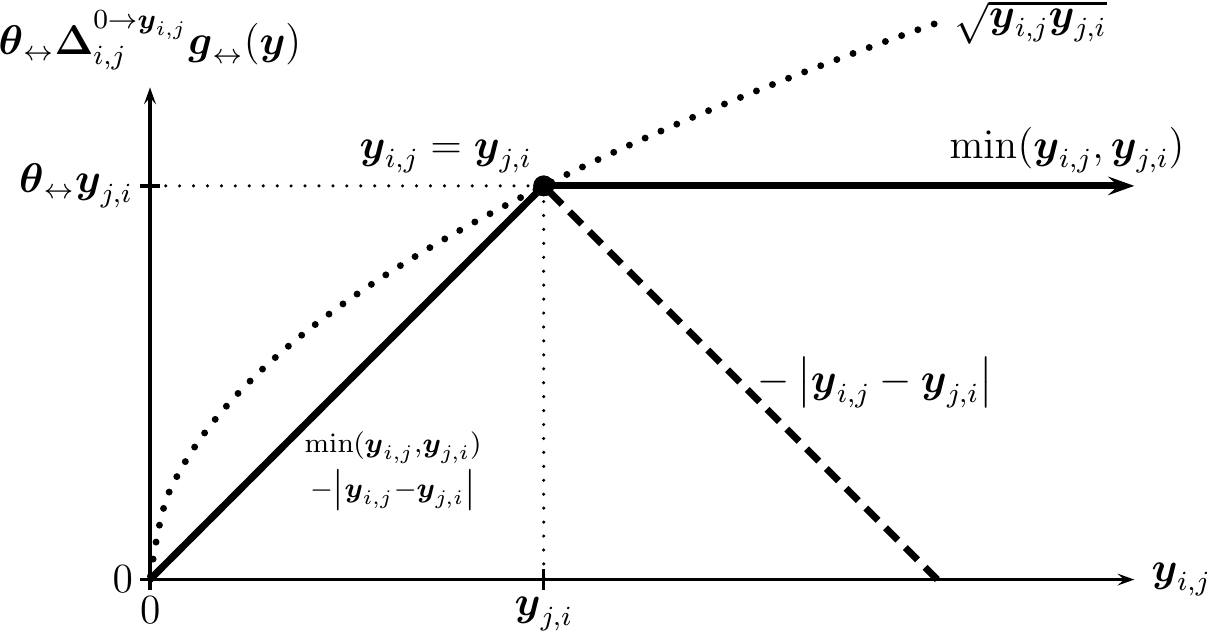}
\end{center} 
\caption{\label{fig:mutuality-eff}Effect of proposed mutuality
  statistics ($\genstat_{\leftrightarrow}$) with parameter
  $\curvpar_{\leftrightarrow}>0$ on the distribution of $\Yij$, given
  that $\Yji=\yji$. Whereas the $\min(\yij,\yji)$ statistic deflates
  the probabilities of those values of $\yij$ that are less than
  $\yji$, thus inflating all of those of $\yij$ above or equal to it,
  thus ``pulling $\Yij$ up'', the $-\abs{\yij-\yji}$ statistic
  deflates the probabilities in both directions away from $\yji$, thus
  inflating those that are the closest, ``pulling $\Yij$
  in''. $\sqrt{\yij\yji}$ inflates greater values of $\yij$ in
  general, inflating by more for greater $\sqrt{\yji}$.}
\end{figure}

\subsubsection{\label{sec:heterogeneity}Actor heterogeneity}
Another property found in social networks is that different
individuals have different overall propensities to have ties:
activity, popularity, and (undirected) sociality. Some of this
heterogeneity may be accounted for by exogenous covariates. For the
unaccounted-for heterogeneity, two major approaches have been used:
\emph{conditional}, in which actor-specific parameters are added to
the model to absorb it, with dyadic independence typically assumed
given these parameters, and \emph{marginal}, in which statistics are
added that represent the effects of heterogeneity on the overall
network features.  Examples of the conditional approach include the
very first exponential-family model for networks, the $p_1$, which
included a fixed effect for every actor \citep{holland1981efp}; and
the $p_2$ model and latent space models, which used using random
effects
\citep{duijn2004re,hoff2005bme,krivitsky2009rdd,mariadassou2010uls}. The
marginal approach includes the $k$-star statistics for $k\ge 2$
\citep{frank1986mg}, which, for a fixed network density, become more
prevalent as heterogeneity increases at the cost of often inducing
degeneracy; alternating $k$-stars and geometrically weighted degree
statistics \citep{snijders2006nse,hunter2006ice}, which attempt to
remedy the degeneracy of $k$-stars; and statistics such as the square
root degree activity/popularity, which sum of each actors' degree
taken to $3/2$ power, which also increases with greater heterogeneity,
but not as rapidly as $k$-stars do \citep{snijders2010isa}, avoiding
degeneracy. In the conditional approach, using fixed effects lacks
parsimony and using random effects creates a problem with a
doubly-intractable normalizing constant, beyond the scope of this
paper, so we develop a marginal approach here.

Actor heterogeneity may be viewed marginally as positive within-actor
correlation among the dyad values. Following the discussion in the
previous sections, we propose a form of pooled within-actor covariance
of variance-stabilized dyad values, scaled to the same magnitude as
the dyad sum:
\begin{equation}
\genstat(\y)=\sum_{i\in N}\frac{1}{n-2}\sum_{j,k\in \dysY_{i\to}\land j<k}(\sqrt{\yij}-\overline{\sqrt{\y}})(\sqrt{\y_{i,k}}-\overline{\sqrt{\y}}), \label{eq:within-actor-covariance}
\end{equation}
for $\dysY_{i\to}$ being the set of actors to who whom $i$ may have
ties ($\equiv \{j':(i,j')\in\netsY\}$) and $\overline{\sqrt{\y}}$
defined as in~\eqref{eq:mean-sqrt}. This statistic would increase with
greater out-tie heterogeneity, an analogous statistic could be
specified for in-tie heterogeneity, and dropping the directionality
would produce an undirected version of this statistic.

We have considered other variants, including the uncentered version,
in which each summand in \eqref{eq:within-actor-covariance} is simply
$\sqrt{\yij\y_{i,k}}$. We found that in undirected networks in
particular, such a model term can induce a degeneracy-like bimodal
distribution of networks. (This is likely because in undirected
networks, the positive dependence is not contained within each actor,
so subtracting $\overline{\sqrt{\y}}$ serves as a counterweight to
avert the \enquote{avalanche}.)

\subsubsection{\label{sec:triad-closure}Triad-closure bias}
We now turn to the question of how to represent triad-closure bias ---
friend-of-a-friend effects --- in count data. As with mutuality,
merely multiplying values of the dyads in a triad leads to a model
which cannot have positive triad closure bias. In addition, ERGM
sufficient statistics that take counts over triads often exhibit
degeneracy. \citep{schweinberger2011isd} For these reasons, we
describe a family of statistics that sum over dyads
instead. \citet{wyatt2010dlr} use a generalization of the curved
geometrically-weighted edgewise shared partners (GWESP) statistic
\citep{hunter2006ice}, though it is not clear whether it is suitable
for data with an infinite sample space. We thus describe a more
conservative family of statistics.

One term used to model triad closure in binary dynamic networks by
\citet{snijders2010isa} is the \emph{transitive ties} effect, the most
conservative special case of the GWESP \citep{hunter2006ice}
statistic. This statistic counts the number of ties $\pij$ such that
there exists at least one path of length 2 (\emph{two-path}) between
them --- a third actor $k$ such that $\y_{i,k}=\y_{k,j}=1$. (Unlike
the triangle count, each dyad may contribute at most $+1$ to the
statistic, no matter how many such $k$s exist.)

One generalization of this statistic to counts is
\begin{equation}\genstat(\y)=\sum_{\ijdysY}\min\left(\yij,\max_{k\in N}\left(\min(\y_{i,k},\y_{k,j})\right)\right).\label{eq:transitiveties}\end{equation}
Intuitively, define the strength of a two-path from $i$ to $j$ to be
the minimum of the values along the path. The statistic is then the
sum over the dyads $\pij$ of the minimum of the value of $\pij$ and
the value of the strongest two-path between. The interpretation is
thus somewhat analogous to that of the minimum mutuality statistic,
with $\yji$ replaced by $\max_{k\in N}(\min(\y_{i,k},\y_{k,j}))$. The
motivation for using minimum, as opposed to negative absolute
difference, to combine the two-path value with the focus dyad value is
that the intuitive notion of friend-of-a-friend effect that this
statistic embodies suggests that while the presence of a mutual friend
may increase the probability or expected value of a particular
friendship (i.e., \enquote{pull it up}), it should not limit it (i.e.,
\enquote{pull it in}) as an absolute difference would. These
interpretations are somewhat oversimplified: it is just as true that a
positive coefficient on this statistic results in $\yij$
\enquote{pulling up} the potential two-paths between $i$ and $j$.

In a directed network, \eqref{eq:transitiveties} would model transitive (hierarchical) triads, while
\[\genstat(\y)=\sum_{\ijdysY}\min\left(\yij,\max_{k\in N}\left(\min(\y_{j,k},\y_{k,i})\right)\right)\] 
would model cyclical (antihierarchical) triads.

The statistic \eqref{eq:transitiveties} is a fairly conservative one,
less likely to induce excessive dependence and bimodality, at the
cost of sensitivity. More generally, one may specify a triadic
statistic using three functions: first,
$v_{\text{2-path}}:\NN_0^2\to\RR$, how the \enquote{value} of a
two-path $i\to j \to k$ is computed from its constituent segments;
second, $v_{\text{combine}}:\RR^{n-2}\to\RR$, how the values of the
possible two-paths from $i$ to $j$ are combined with each other to
compute the strength of the pressure on $i$ and $j$ to close the triad
or increase their interaction; and third,
$v_{\text{affect}}:\NN_0\times\RR \to \RR$ how this pressure affects
$\Yij$. Given these,
\begin{equation}
\genstat(\y)=\sum_{\ijdysY}v_{\text{affect}}\left(\yij,v_{\text{combine}}\left(v_{\text{2-path}}(\y_{i,k},\y_{k,j})_{k\in N\setsub \{i,j\}}\right)\right).\label{eq:transitiveties-general}
\end{equation}

Thus, for example, one could set $v_{\text{combine}}$ to sum its
arguments rather than take their maximum, or one can replace taking
the minimum with taking a geometric mean.  We illustrate the
difference it makes in Section~\ref{sec:bkfrab}.

\section{\label{sec:example}Examples}

\subsection{\label{sec:zachary}Example 1: Social relations in a karate club}
In this application, we use a Poisson-reference ERGM to compare impacts
of social forces --- transitivity and homophily --- on the structure
of a valued network of interactions between members of a university
karate club. \citet{zachary1977ifm} reported observations of social
relations in a university karate club with membership that varied
between 50 and 100. The actors --- 32 ordinary club members and
officers, the club president (\enquote{John A.}), and the part-time
instructor (\enquote{Mr. Hi}) --- were the ones who consistently
interacted outside of the club. Over the course of the study, the club
divided into two factions, and, ultimately, split into two clubs, one
led by Hi and the other by John and the original club's officers. The
split was driven by a disagreement over whether Hi could unilaterally
change the level of compensation for his services. We pose a similar
question to \citet{goodreau2008bff}: is the structure at the time of
observation driven by faction allegiance or by transitivity
(\enquote{friend-of-a-friend} effects)?

\citeauthor{zachary1977ifm} identifies the faction with which each of
the 34 actors was aligned and how strongly and reports, for each pair
of actors, the count of social contexts in which they interacted. The
8 contexts considered were academic classes at the university; Hi's
private karate studio in his night classes; Hi's private karate studio
where he taught on weekends; student-teaching at Hi's studio; the
university rathskeller (bar) located near the karate club; a bar
located near the university campus; open karate tournaments in the
area; and intercollegiate karate tournaments. The highest number of
contexts of interaction for a pair of individuals that was observed
was 7. The network is visualized in Figure~\ref{fig:zach-plot}.
\begin{figure}
  \begin{center}
  \includegraphics[trim=0 108 0 108,width=.8\textwidth,clip=true]{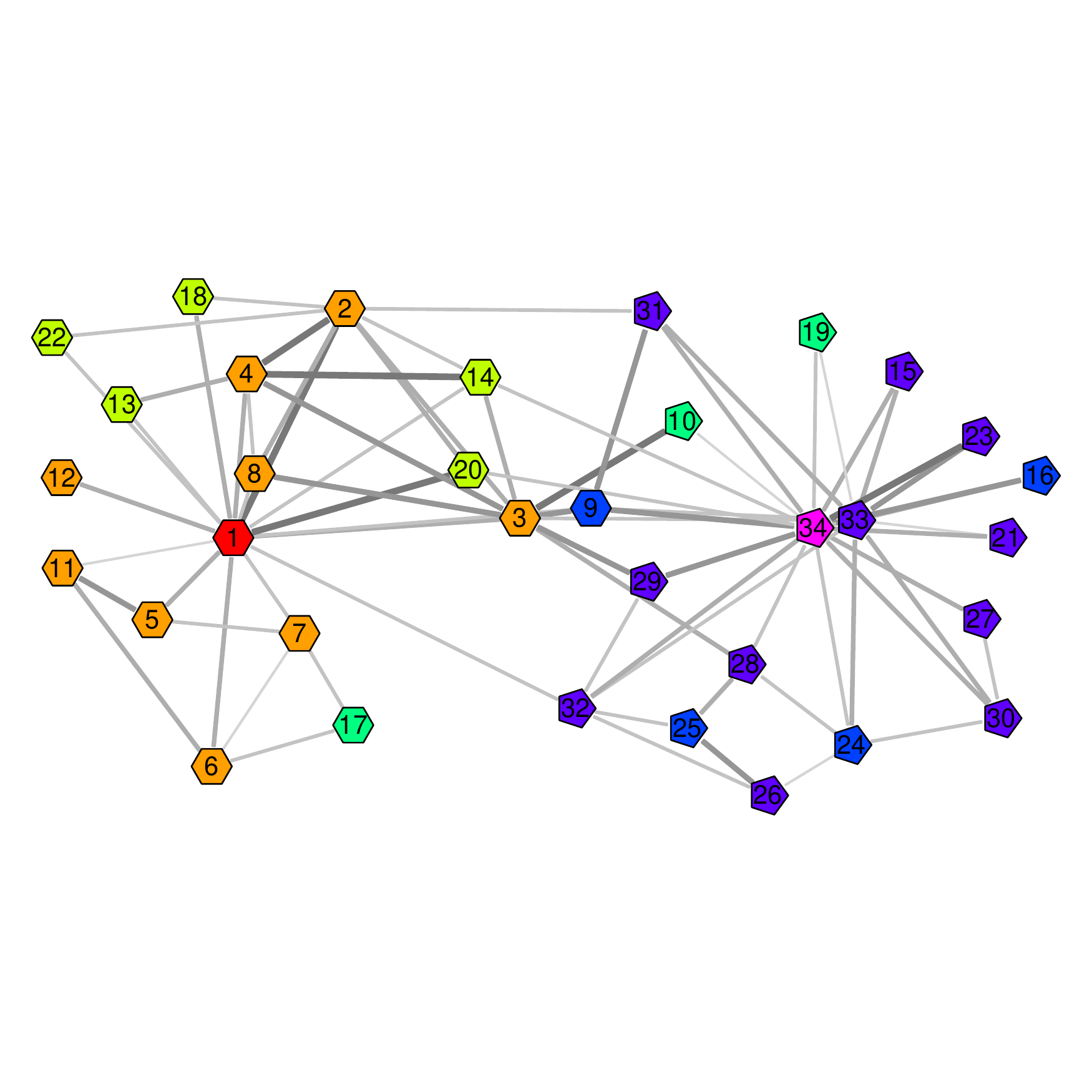}
    \begin{tabular}{lccc}
    Leader & \textcolor[HTML]{FF0000}{Mr. Hi} & &  \textcolor[HTML]{FF00FF}{John A.}\\
    Alignment & \textcolor[HTML]{FF9F00}{Strong} \textcolor[HTML]{BFFF00}{Weak} & \textcolor[HTML]{00FF80}{Neutral}  & \textcolor[HTML]{0040FF}{Weak} \textcolor[HTML]{6000FF}{Strong}\\
    Final & \includegraphics[height=1em]{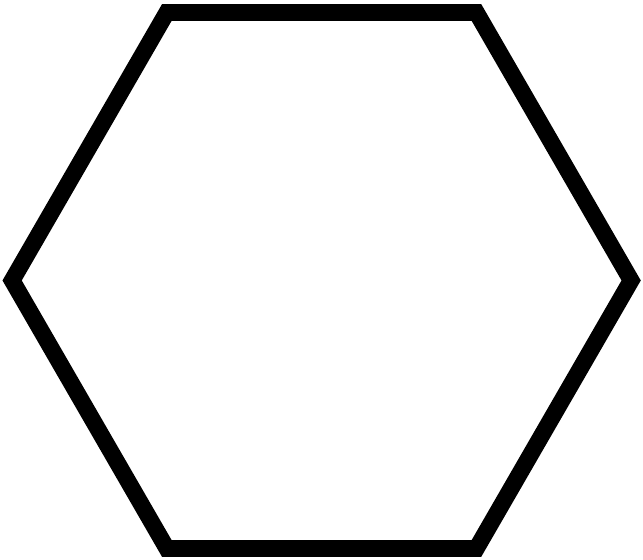} & & \includegraphics[height=1em]{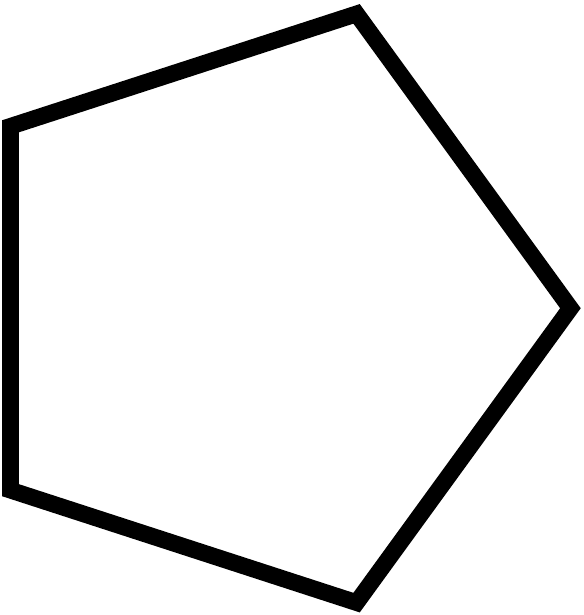}
  \end{tabular}
  \end{center}
\caption{\label{fig:zach-plot} The sociogram of the \citet{zachary1977ifm} karate club network. The color of each plotting symbol shows each actor's faction alignment and its shape shows which of the two clubs the actor ultimately joined. Darker and thicker lines correspond to more social context of interaction. }
\end{figure}

We begin with a Poisson-reference ERGM. Empirically, this network is
more sparse than a Poisson density for dyad values would suggest: the
mean number of dyadwise contexts of interaction
($\sum_{\ijdysY}\yij/\abs{\dysY}$) is $0.41$, for which a Poisson
distribution predicts an expected density
($\E(\sum_{\ijdysY}\I{\Yij>0}/\abs{\dysY})$) of $0.34$, whereas the
observed density is $0.14$. Given that two individuals interact, the
interaction for a given pair of individuals is likely to be dependent
across the social contexts counted so the counts are likely to be
over- or under-dispersed. Thus, as a baseline, we model the values as
a zero-modified Conway--Maxwell--Poisson \citep{shmueli2005udf}
distribution using the following sufficient statistics:
\begin{description}[partopsep=0pt,topsep=0pt,parsep=0pt,itemsep=0pt]
  \item[baseline propensity to have ties:] number of dyads with nonzero value;
  \item[baseline intensity of interactions:] sum of dyad values; and
  \item[CMP dispersion:] a statistic of the form \eqref{eq:CMP-term}.
\end{description}

In modeling the structure of the interactions, we represent
differential propensity of the faction leaders to interact, the effect
of differences in faction membership, and triad-closure bias using the
following sufficient statistics:
\begin{description}[partopsep=0pt,topsep=0pt,parsep=0pt,itemsep=0pt]
  \item[intensity of Hi's interaction:] sum of dyad values incident on Hi;
  \item[intensity of John's interaction:] sum of dyad values incident on John;
  \item[similarity (negative difference) in faction membership:] a
    statistic of the form \eqref{eq:poisreg-stat} with
    $\xijk=-\abs{m_i-m_j}$, where $m_i$ is the faction membership
    code of actor $i$; and
  \item[transitivity of intensities:] the statistic
    \eqref{eq:transitiveties}.
\end{description}
Faction memberships $m_i$ are coded as follows: strongly Hi's as $-2$,
weakly Hi's as $-1$, neutral as $0$, weakly John's as $+1$, and
strongly John's as $+2$. We fit three models: the full model, with all
of the above-described terms, the model excluding transitivity
(\enquote{Faction}), and the model excluding faction membership
(\enquote{Transitivity}).

\begin{table}
\caption{\label{tab:zach-fit}Results from fitting the models to Karate Club network}
\begin{center}
\begin{tabular}{lrrr}
  \hline
  & \multicolumn{3}{c}{Estimates (Std. Errors)} \\
  Term & \centercell{Faction} & \centercell{Transitivity} & \centercell{Full} \\ 
  \hline
  Dispersion & \coefsig{-2.55}{0.57} & \coefsig{-1.87}{0.61} & \coefsig{-2.33}{0.60} \\
  Ties & \coefsig{-7.76}{0.99} & \coefsig{-7.29}{1.04} & \coefsig{-7.54}{1.01} \\ 
  Baseline intensities & \coefsig{3.97}{0.68} & \coefsig{2.88}{0.75} & \coefsig{3.64}{0.74} \\
  Hi's intensities & \coefsig{0.80}{0.15} & \coefsig{0.50}{0.12} & \coefsig{0.71}{0.15} \\ 
  John's intensities & \coefsig{0.80}{0.14} & \coefsig{0.54}{0.12} & \coefsig{0.72}{0.16} \\
  Faction similarity & \coefsig{0.27}{0.04} &  & \coefsig{0.25}{0.04} \\
  Transitivity &  & \coefsig{0.21}{0.09} & \coefnsig{0.11}{0.09} \\ 
   \hline
\end{tabular}
\end{center}
{
\parbox{1\textwidth}{
Coefficients statistically significant at $\alpha=0.05$ are \textbf{bolded}.\\
Standard errors incorporate the uncertainty introduced by approximating of the likelihood using MCMC \citep{hunter2006ice}.}
}
\end{table}
Table~\ref{tab:zach-fit} gives the results for the three fits. MCMC
diagnostics, described by \citet{goodreau2008st}, show adequate mixing
and networks simulated from these fits have, on average, statistics
equal to the observed sufficient statistics. The CMP dispersion
estimates for all three models are negative and highly significant,
very far from the non-open boundary of the parameter space at
$\curvpar_k\le 1$, so the lack of steepness is unlikely to be
problematic in this case. The estimated value of the dispersion
parameter for the full model ($-2.33$) suggests strong underdispersion
relative to zero-modified Poisson and the rest of the model: it
implies that the estimated \enquote{denominator} is
$(\yij!)^{1-(-2.33)}=(\yij!)^{3.33}$, rather than Poisson's
$(\yij!)^{1}$. Highly negative CMP coefficients may also be
interpreted as the model being an overfit.

There is a highly significant negative coefficient on the baseline
propensity for ties. An interpretation for this is that, from the
point of view of a single dyad, the probability of a given pair of
actors having a tie is deflated, but if they do have a tie, it is
likely to be across multiple social contexts. Both faction leaders
appear to have greater overall propensities to interact than the other
club members, and, interestingly, they appear to have similar effect
sizes to each other.

Taken separately, the faction similarity effect is highly
statistically significant and positive, indicating a positive faction
cohesion. The transitivity effect is significant by the Wald test, but
a Monte Carlo test gives its one-sided (since only positive
transitivity is of interest) $P$-value as $0.11$. Put together, the
transitivity loses any potential significance. (Notably, the estimated
correlation between their parameter estimates is $-0.34$.) This
suggests that they are explaining the same aspect of the network
structure, but that faction allegiance is the much stronger
explanation. Though factions may themselves be endogenous due to
influence through social relations or, as \citeauthor{zachary1977ifm}
concludes, the two processes reinforced each other over time, at
observation time, faction allegiance explains network structure better
than transitivity.

\subsection{\label{sec:bkfrab}Example 2: Interactions in a fraternity}
In a series of studies in the 1970s, \citet{bernard19791980ias}
assessed accuracy of retrospective sociometric surveys in a number of
settings, including a college fraternity whose 58 occupants had all
lived there for at least three months. To record the true amounts of
interaction, for several days, unobtrusive observers were sent to
periodically walk through the fraternity to note students engaged in
conversation. Obtaining these network data from
\citet{batagelj2006pd}, we model these observed pairwise interaction
counts.

The raw distribution of counts, given in
Figure~\ref{fig:bkfrab-summ-hist}, appears to be strongly
overdispersed relative to Poisson, and, indeed, relative to the
geometric distribution: the mean of counts is $2.0$, while their
standard deviation (not variance) is $3.4$. At least some of this is
due to actor heterogeneity: the square root of the within-actor
variance of the counts is $3.1$. Excluding extreme observations (all
values over 30) does not make a qualitative difference. (The
statistics are $1.9$, $3.0$, and $2.8$, respectively.) Nor does there
appear a natural place to threshold the counts to produce a binary
network. (See Figure~\ref{fig:bkfrab-summ-threshold}.)
\begin{figure}
  \centering
  \subfloat[Count distribution]{\label{fig:bkfrab-summ-hist}\includegraphics[width=0.45\textwidth]{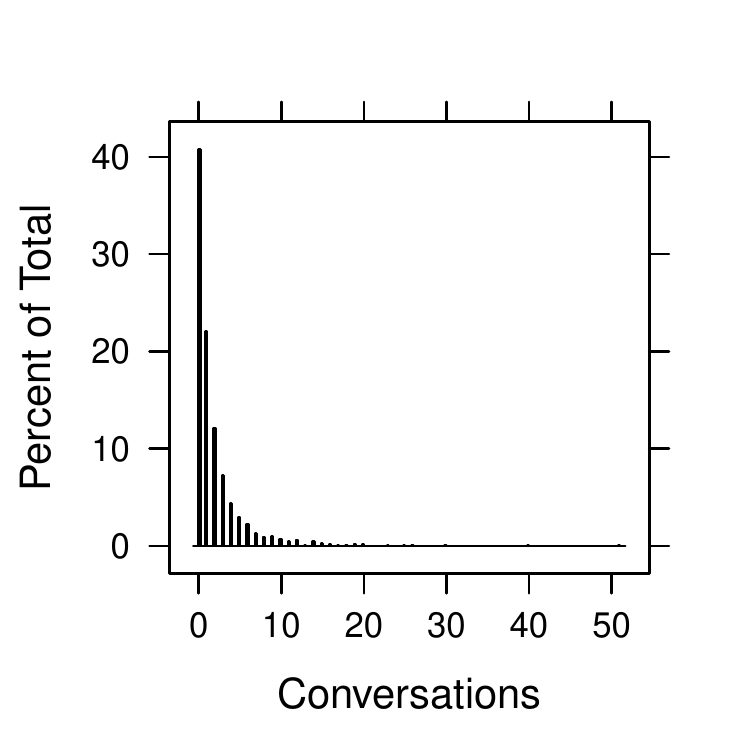}}\quad\subfloat[Effect of thresholding]{\label{fig:bkfrab-summ-threshold}\includegraphics[width=0.45\textwidth]{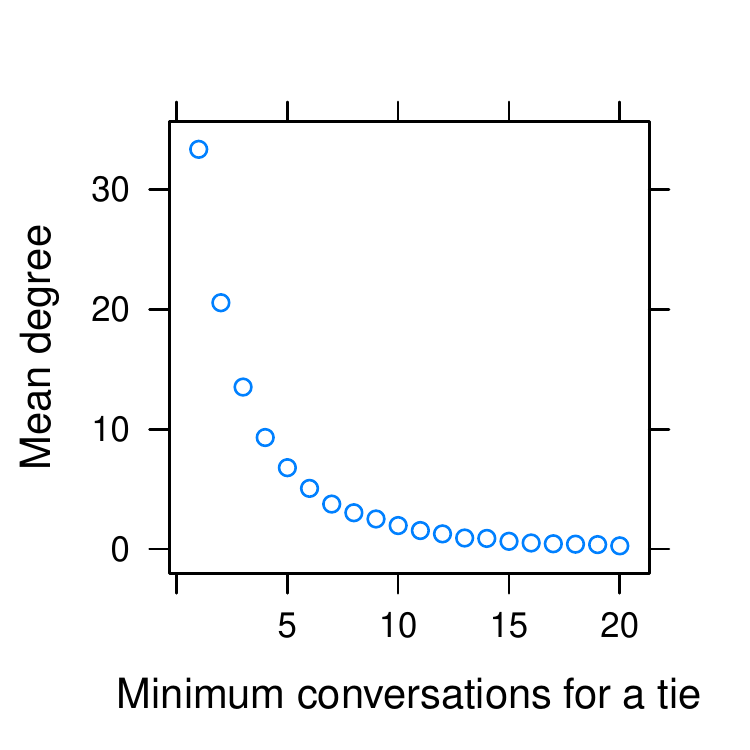}}
  \caption{\label{fig:bkfrab-summ} Conversation count summaries for
    Bernard and Killworth's fraternity network }
\end{figure}
We thus model the baseline shape of the distribution of counts using the following terms:
\begin{description}[partopsep=0pt,topsep=0pt,parsep=0pt,itemsep=0pt]
\item[baseline propensity to have ties:] number of dyads with nonzero value;
  \item[baseline intensity of interactions:] sum of dyad values; and
  \item[underdispersion:] the statistic \eqref{eq:sqrtsum-model}.
\end{description}
(We have also attempted to use CMP (via~\eqref{eq:CMP-term}) but found the process
to be unstable due to the greater-than-geometric level of dispersion.)

Little was recorded about the social roles of the fraternity members,
so we consider the effects of endogenous social forces:
\begin{description}[partopsep=0pt,topsep=0pt,parsep=0pt,itemsep=0pt]
\item[actor heterogeneity:] the undirected version of
  \eqref{eq:within-actor-covariance};
  \item[transitivity of intensities:] the statistic
    \eqref{eq:transitiveties}.
\end{description}
\cite{faust2007vls}, in particular, found that in many empirical
networks, much of the apparent triadic effects are accounted for by
variations in degree distribution and other lower-order effects. Thus,
we consider four models: baseline shape only (B), baseline with
heterogeneity (BH), baseline with transitivity (BT), and all terms
(BHT), to explore this concept in a valued setting.

\begin{table}
  \caption{\label{tab:bkfrab-fit}Results from fitting the models to Bernard and Killsworth's fraternity network}
\begin{center}
{\small
\begin{tabular}{lrrrr}
  \hline
  & \multicolumn{4}{c}{Estimates (Std. Errors)} \\
  Term & \centercell{B} & \centercell{BH} & \centercell{BT} & \centercell{BHT} \\ 
  \hline
  Ties & \coefsig{5.60}{0.21} & \coefsig{4.96}{0.17} & \coefsig{6.24}{0.21} & \coefsig{4.98}{0.17} \\ 
  Intensity & \coefsig{3.65}{0.05} & \coefsig{3.13}{0.06} & \coefsig{3.40}{0.07} & \coefsig{3.12}{0.06} \\
  Underdispersion & \coefsig{-9.71}{0.22} &\coefsig{-8.23}{0.20} & \coefsig{-10.52}{0.22} & \coefsig{-8.26}{0.19} \\
  Heterogeneity & & \coefsig{1.48}{0.06} &  & \coefsig{1.46}{0.07} \\
  Transitivity & & & \coefsig{0.46}{0.05} & \coefnsig{0.03}{0.04} \\ 
  \hline
\end{tabular}
}
\end{center}
{\footnotesize
\parbox{1\textwidth}{
Coefficients statistically significant at $\alpha=0.05$ are \textbf{bolded}.\\
Standard errors incorporate the uncertainty introduced by approximating of the likelihood using MCMC \citep{hunter2006ice}.}
}
\end{table}
We report the model fits in Table~\ref{tab:bkfrab-fit}. MCMC
diagnostics, described by \citet{goodreau2008st}, show adequate mixing
and unimodal distributions of sufficient statistics, and networks
simulated from these fits have, on average, statistics equal to the
observed sufficient statistics. The baseline dyadwise distribution
terms are difficult to interpret, but the highly negative coefficient
on underdispersion suggests a a strong degree of overdispersion, as
expected. Some of this overdispersion appears to be absorbed by
modeling actor heterogeneity, however. There are indications a high
degree of heterogeneity in individuals' interactions, over and above
that expected for even the overdispersed baseline
distribution. (Monte Carlo $P\text{-val.}<0.001$ based on
10,000 draws.)

Without accounting for actor heterogeneity (i.e., Model BT), there
appears to be a strong transitivity effect --- a friend of a friend is
a friend --- and the Monte Carlo test confirms this with a similar
$P$-value. However, if actor heterogeneity is accounted for, the
transitivity effects vanish (simulated one-sided
$P\text{-val.}=0.43$), suggesting that the underlying social process
is better explained by a relatively small number of highly social
individuals whose ties to each other and to (less social) third
parties create excess transitive ties for the overall amount of
interaction observed. At the same time, if, instead of using
\eqref{eq:transitiveties} as the test statistic, we use a less
conservative statistic of the form \eqref{eq:transitiveties-general}
with $v_{\text{2-path}}(x_1,x_2)=\sqrt{x_1x_2}$ (geometric mean),
$v_{\text{combine}}(x_1,\dotsc,x_{n-2})=\sum_{k=1}^{n-2}x_k$, and
$v_{\text{affect}}(x_1,x_2)=\sqrt{x_1x_2}$, the effect's significance
seems to increase (one-sided $P\text{-val.}=0.07$). However, when we
attempted to fit the model with this effect, the process exhibited the
degeneracy-like bimodality. This suggests that there is a trade-off
between stability and power to detect subtle effects.

\section{\label{sec:discussion}Discussion}

We have generalized the exponential-family random graph models to
networks whose relationships are unbounded counts, explored the issues
that arise when generalizing, and proposed ways to model several
common network features for count data. We demonstrated our
development by analyzing two very different networks to examine the
interaction of friend-of-a-friend effects with homophily and
individual heterogeneity.

This paper focused on modeling counts. More generally, one can express
a \wergm{} by replacing the set of possible dyad values $\NN_0$ by a
more general set $\dyadvals$ and replacing $\h(\y)$ with a more
general $\sigma$-finite measure space $(\netsY,\sigY,\Mref)$ with
reference measure $\Mref$, then postulating a probability measure
$\Mtheg$ with Radon-Nikodym derivative of $\Mtheg$ with respect to
$\Mref$,
\begin{equation*}
  \dthegdref(\y)=\frac{\myexp{\innerprod{\natpar}{\genstat(\y)}}}{\cHeg(\curvpar)},\label{eq:nbergm}
\end{equation*}
\citetext{\citealp[pp.~115--116]{barndorffnielsen1978ief};
  \citealp[pp.~1--2]{brown1986fse}} with the normalizing constant
\[\cHeg(\curvpar)=\int_\netsY\myexp{\innerprod{\natpar}{\genstat(\y)}}d\Mref(\y).\]
For binary and count data, and discrete data in general, $\Mref$ could
be specified as a function relative to the counting measure, while for
continuous data, it could be defined with respect to the Lebesgue
measure. Still, as with count data, the shape of this function would
need to be specified.

Other scenarios might call for more complex specifications of the
reference measure. Some network data, such as measurements of duration
of conversation \citep{wyatt2010dlr} and international trade volumes
\citep{westveld2011mem} are continuous measurements except for having
a positive probability of two actors not conversing at all or two
countries having no measured trade. \citeauthor{westveld2011mem} use a
normal distribution to model the log-transformed trade volume,
imputing $0=\log(1)$ for $0$ observed trade volumes (all nonzero trade
volumes being greater than $1$ unit), and they note this issue and address it
by pointing out that in their (latent-variable) model, an impact of
such an outlier would be contained. \Wergms{} may provide a more
principled approach by specifying a semicontinuous $\Mref$, such as
one that puts a mass of $1/2$ on $0$ and $1/2$ on Lebesgue measure
on $(0,\infty)$.

We have also focused on data that do not impose any constraints on the
sample space: $\netsY\equiv\dyadvals^\dysY$. But, some types of
network data, such as those where each actor ranks the others
\citep{newcomb1961ap} may be viewed as imposing a more complex
constraint on sample spaces: setting $\dyadvals=\{1..n-1\}$ and
constraining $\netsY$ to ensure that each ego assigns a unique rank to
each alter gives a sample space of permutations that
could, with a counting measure, serve as the reference measure for an
ERGM on rank data. These, and other applications are a subject for
ongoing and future work.

This paper focuses on models for cross-sectional networks, where a
single snapshot of relationship states or relationships aggregated
over a time period are observed. For longitudinal data, comprising
multiple snapshots of networks over the same actors over time, binary
ERGMs have been used as a basis for discrete-time models for network
tie evolution by \citet{robins2001rgm}, \citeauthor{wyatt2009dmn}
\citeyearpar{wyatt2009dmn, wyatt2010dlr}, \citet{hanneke2010dtm}, and
\citet{krivitsky2010smd}.
\Wergms{} can be directly applied to the discrete temporal
ERGMs of \citet{hanneke2010dtm} although their adaptation to the work
of \citet{krivitsky2010smd} may be less straightforward, especially if
the benefits to interpretability of the separable models are to be
retained.

In practice, networks are not always observed
completely. \citet{handcock2010msn} develop an approach to ERGM
inference for partially observed or sampled binary networks. It would
be natural to extend this approach to valued networks and \wergms{}.

Some methods for assessing a network model's fit, particularly MCMC
diagnostics \citep{goodreau2008st} can be used with little or no
modification. Others, like the goodness-of-fit methods of
\citet{hunter2008gfs} may require development of characteristics
meaningful for valued networks. It may also be possible to extend the
stability criteria of \citet{schweinberger2011isd} to models with
infinite sample spaces.

\section*{Acknowledgments}
The author thanks Mark S. Handcock for helpful discussions and
comments on early drafts; Stephen E. Fienberg for his feedback on this
manuscript; and Michael Schweinberger, David R. Hunter, Tom
A. B. Snijders, and Xiaoyue Niu for their comments and advice. This
research was supported by Portuguese Foundation for Science and
Technology Ci\^{e}ncia 2009 Program, ONR award N000140811015, and NIH
award 1R01HD068395-01.

\bibliographystyle{plainnat}
\addcontentsline{toc}{section}{References}
\bibliography{ERGMs_for_Valued_Networks}

\newpage

\appendix
\section{\label{app:pois-MH} A sampling algorithm for a Poisson-reference ERGM}
We use a Metropolis-Hastings sampling algorithm
(Algorithm~\ref{alg:pois-MH}) to sample from a Poisson-reference ERGM,
using a Poisson kernel with its mode at the present value of a dyad
and, occasionally (with a specified probability $\pi_0$), proposing a
jump directly to $0$. Because, as we discuss in
Section~\ref{sec:count-thresholded}, counts of interactions are often
zero-inflated relative to Poisson, setting $\pi_0>0$ can be used to
speed-up mixing. For highly overdispersed distributions, a Poisson
kernel may be trivially replaced by a geometric or even
negative-binomial kernel.

This algorithm selects the dyad on which the jump is to be proposed
at random. A possible improvement to this algorithm would be to adapt
to it the tie-no-tie (TNT) proposal \citep{morris2008ser}, which
optimizes sampling in sparse (zero-inflated) networks by focusing on
dyads which have a nonzero values.
\begin{algorithm}
  \caption{\label{alg:pois-MH}Sampling from a Poisson-reference ERGM
    with no constraints, optimized for zero-inflated distributions}
\begin{flushleft}
  \begin{description}[partopsep=0pt,topsep=0pt,parsep=0pt,itemsep=0pt]
  \item[Let:]\
    \begin{description}[partopsep=0pt,topsep=0pt,parsep=0pt,itemsep=0pt]
    \item[$\RandomChoose(A)$] return a random element of a set $A$
    \item[$\Uniform(a,b)$] return a random draw from the
      $\Uniform(a,b)$ distribution
    \item[$\Poisson_{\ne y}(\lambda)$] return a random draw from the
      $\Poisson(\lambda)$ distribution, conditional on not drawing $y$
    \item[$p(y^*;y)$] $=\frac{\myexp{-\left(y+\half\right)}\left(y+\half\right)^{y^*}/y^*!}{1-\myexp{-\left(y+\half\right)}\left(y+\half\right)^y/y!}$, the
      pmf of a $\Poisson_{\ne y}(y+\half)$ draw
    \end{description}
  \end{description}
\end{flushleft}
\begin{algorithmic}[1]
  \REQUIRE $\y^{(0)}\in\netsY$, $T$ sufficiently large, $\dysY$, $\genstat$, $\cnmap$, $\pi_0\in [0,1)$
  \ENSURE a draw from the specified Poisson-reference ERGM
  \FOR{$t \gets 1..T$}
  \STATE $\pij \gets \RandomChoose(\dysY)$ \COMMENT{Select a dyad at random.}
  \IF{$\yij\ne 0 \land \Uniform(0,1)<\pi_0$}
  \STATE $y^*\gets 0$ \COMMENT{Propose a jump to $0$ with probability $\pi_0$.}
  \ELSE
  \STATE $y^* \gets \Poisson_{\ne \yij^{(t-1)}}\left(\yij^{(t-1)}\right)$ \COMMENT{Propose a jump to a new value.}
  \ENDIF
  \STATE $q \gets 
  \begin{cases}
    \frac{\pi_0+(1-\pi_0)p(0;y^*)}{p(y^*;0)} & \yij^{(t-1)}=0 \\
    \frac{p(\yij^{(t-1)};0)}{\pi_0+(1-\pi_0)p(0;\yij^{(t-1)})}& \yij^{(t-1)}\ne 0 \land y^*=0 \\
    \frac{\cancel{(1-\pi_0)}p(\yij^{(t-1)};y^*)}{\cancel{(1-\pi_0)}p(y^*;\yij^{(t-1)})} & \text{otherwise}
  \end{cases}$
  \STATE $r\gets q \times \frac{\yij^{(t-1)}!}{y^*!} \times \myexp{\innerprod{\natpar}{\changeij^{\yij^{(t-1)} \to y^*}\left(\y^{(t-1)}\right)}}$
  \IF{$\Uniform(0,1)<r$}
  \STATE $\y^{(t)}\gets\y^{(t-1)}_{\pij=y^*}$ \COMMENT{Accept the proposal.}
  \ELSE
  \STATE $\y^{(t)}\gets\y^{(t-1)}$ \COMMENT{Reject the proposal.}
  \ENDIF
  \ENDFOR
  \RETURN $\y^{(T)}$
\end{algorithmic}
\end{algorithm}

\section{\label{app:CMP-steep} Non-steepness of the Conway--Maxwell--Poisson family}
Expressed in its exponential-family canonical form, a random variable
$X$ with the Conway--Maxwell--Poisson distribution has the pmf
\[\Pteg(X=x)=\frac{\myexp{\curvpar_1 x+\curvpar_2 \log(x!)}}{\ceg(\curvpar)},\ x\in\NN_0\]
with the normalizing constant
\[\ceg(\curvpar)=\sum_{x'=0}^\infty\frac{\myexp{\curvpar_1 x'+\curvpar_2 \log(x'!)}}{\ceg(\curvpar)}.\]
\begin{thm}
The Conway--Maxwell--Poisson family is not regular.
\end{thm}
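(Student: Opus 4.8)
The plan is to use the standard characterisation that an exponential family is \emph{regular} exactly when its natural parameter space $\natcurvpars$ is an open subset of $\RR^\ncurvpar$ \citetext{\citealp[pp.~1--2]{brown1986fse}}; here $\ncurvpar=2$, the sufficient statistic is $(x,\log(x!))$, and, as in \eqref{eq:ctergm-Theta},
\[\natcurvpars=\{\curvpar\in\RR^2:\ceg(\curvpar)<\infty\},\qquad \ceg(\curvpar)=\sum_{x=0}^\infty e^{\curvpar_1 x}(x!)^{\curvpar_2}.\]
So it suffices to compute $\natcurvpars$ explicitly and observe that it fails to be open.

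First I would determine the region of convergence of $\ceg$ by the ratio test applied to the summands $a_x=e^{\curvpar_1 x}(x!)^{\curvpar_2}$, whose successive ratio is $a_{x+1}/a_x=e^{\curvpar_1}(x+1)^{\curvpar_2}$. For $\curvpar_2<0$ this ratio tends to $0$, so the series converges for every $\curvpar_1\in\RR$; for $\curvpar_2>0$ it tends to $+\infty$, so the series diverges for every $\curvpar_1$. On the borderline $\curvpar_2=0$ the sum is exactly the geometric series $\sum_{x}e^{\curvpar_1 x}$, which converges if and only if $\curvpar_1<0$. Assembling the three cases gives
\[\natcurvpars=\{\curvpar\in\RR^2:\curvpar_2<0\}\cup\{\curvpar\in\RR^2:\curvpar_2=0,\ \curvpar_1<0\}.\]

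Finally I would exhibit non-openness directly: any point $\curvpar=(\curvpar_1,0)$ with $\curvpar_1<0$ lies in $\natcurvpars$, yet every neighbourhood of it contains points $(\curvpar_1,\varepsilon)$ with $\varepsilon>0$, which by the first step are excluded from $\natcurvpars$. Hence such points belong to $\natcurvpars$ but are not interior to it, so $\natcurvpars$ is not open and the family is not regular. I expect the main subtlety to be precisely this borderline ray $\{\curvpar_2=0,\ \curvpar_1<0\}$: it is what makes $\natcurvpars$ reach its own boundary, and pinning it down requires stepping outside the ratio test (which only settles $\curvpar_2\neq 0$) to recognise the geometric series and obtain the strict condition $\curvpar_1<0$. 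The remaining $\curvpar_2\neq 0$ cases and the openness argument should be routine.
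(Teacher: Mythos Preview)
Your proof is correct and follows essentially the same approach as the paper: identify the natural parameter space $\natcurvpars=\{\curvpar_2<0\}\cup\{\curvpar_2=0,\ \curvpar_1<0\}$ and observe that the ray $\{\curvpar_2=0,\ \curvpar_1<0\}$ prevents $\natcurvpars$ from being open, hence the family is not regular. The only difference is that the paper simply cites \citet{shmueli2005udf} for the form of $\natcurvpars$, whereas you derive it directly via the ratio test and the geometric-series borderline case; your derivation is a welcome addition but does not change the structure of the argument.
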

\begin{proof}
The natural parameter space of CMP is
\[\natcurvpars=\{\curvpar'\in\RR^2:\curvpar_2<0 \lor (\curvpar_2=0 \land \curvpar_1<0)\}\]
\citep{shmueli2005udf}. Due to the boundary at $\curvpar_2=0$,
$\natcurvpars$ is not an open set, and hence the family is not regular
\citep[p. 2]{brown1986fse}.
\end{proof}

\begin{thm}
The Conway--Maxwell--Poisson family is not steep.
\end{thm}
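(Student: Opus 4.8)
The plan is to recall the definition of steepness for an exponential family and then exhibit explicitly a boundary direction along which the mean-value parameter (the expected sufficient statistic) fails to diverge, which is precisely what steepness would require. By the preceding theorem, the natural parameter space $\natcurvpars$ is not open, with the relevant boundary being the segment $\{\curvpar'\in\RR^2:\curvpar_2=0,\ \curvpar_1<0\}$. Recall that a minimal exponential family is \emph{steep} if, for every boundary point $\curvpar^*\in\partial\natcurvpars\cap\natcurvpars$ and every interior point $\curvpar\interior$, the directional derivative of the log-normalizing-constant (the cumulant function) diverges to $+\infty$ as one approaches $\curvpar^*$ from $\curvpar\interior$ \citetext{\citealp[pp.~71--72, 142--144]{barndorffnielsen1978ief}; \citealp[Ch.~3]{brown1986fse}}. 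Equivalently, steepness fails if the expectation of the sufficient statistic remains finite as the parameter approaches such a boundary point; my strategy is to verify this finiteness for the statistic $\genstat_2(x)=\log(x!)$ at the boundary $\curvpar_2=0$.

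First I would fix a boundary point, say $\curvpar^*=(\curvpar_1^*,0)$ with $\curvpar_1^*<0$, so that the distribution there is exactly $\Geometric(p=1-\myexp{\curvpar_1^*})$, which lies \emph{inside} $\natcurvpars$. The key quantity to control is the expectation of the second sufficient statistic,
\[
\Eteg\!\left(\log(X!)\right)=\sum_{x=0}^\infty \log(x!)\,\frac{\myexp{\curvpar_1^* x}}{\ceg(\curvpar^*)}.
\]
I would show this sum is finite. Since $\log(x!)=O(x\log x)$ by Stirling's approximation, and the geometric weights $\myexp{\curvpar_1^* x}$ decay exponentially (as $\curvpar_1^*<0$), the summand is dominated by $x\log x\cdot\myexp{\curvpar_1^* x}$, whose series converges. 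Hence $\Eteg(\log(X!))<\infty$ at the boundary point itself.

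Next I would connect this finiteness to the failure of steepness. The directional derivative of the cumulant function in the direction pointing from the interior toward the boundary (i.e.\ the $\curvpar_2$-direction) is exactly the conditional expectation of $\genstat_2(X)=\log(X!)$. Steepness at $\curvpar^*$ would demand that this derivative tend to $+\infty$ as $\curvpar_2\uparrow 0$; but by monotone/dominated convergence combined with the uniform exponential tail bound valid in a neighborhood of $\curvpar^*$ (any $\curvpar_1$ slightly above $\curvpar_1^*$ still keeps $\curvpar_1<0$, preserving the convergent geometric-type tail), the expectation converges to the finite value computed above rather than diverging. This contradicts the divergence required by steepness, so the family is not steep.

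\textbf{The main obstacle} I expect is the interchange of limit and summation needed to pass from the interior expectations $\Eteg(\log(X!))$ at $(\curvpar_1,\curvpar_2)$ to the boundary value as $\curvpar_2\uparrow 0$: one must produce a summable dominating function valid uniformly on a neighborhood of $\curvpar^*$. The clean way around this is to restrict to a small box $\{\curvpar_1\le \curvpar_1^*+\delta<0,\ \curvpar_2\le 0\}$ on which $\myexp{\curvpar_2\log(x!)}\le 1$ and $\myexp{\curvpar_1 x}\le \myexp{(\curvpar_1^*+\delta)x}$, giving the single dominating series $\sum_x \log(x!)\,\myexp{(\curvpar_1^*+\delta)x}<\infty$ by the Stirling estimate; dominated convergence then applies and the finite limit is secured. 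The remainder of the argument is just the bookkeeping of identifying this finite mean-value limit with the non-divergent directional derivative, which is routine once the dominating bound is in place.
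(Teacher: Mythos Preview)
Your proposal is correct and shares the paper's core idea: show that at a boundary point $\curvpar^*=(\curvpar_1^*,0)$ with $\curvpar_1^*<0$ (where $X$ is geometric), the expected sufficient statistic---in particular $\E(\log(X!))$---is finite, using a Stirling-type bound and the exponential tail of the geometric law.

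The only real difference is in how you invoke the steepness criterion. The paper appeals directly to \citet[Proposition~3.3]{brown1986fse}, which characterizes steepness by the condition $\Eteg(\lVert\genstat(X)\rVert)=\infty$ for every $\curvpar\in\natcurvpars\setminus\natcurvpars\interior$; since the boundary points in question already lie \emph{in} $\natcurvpars$, one simply evaluates the expectation there and checks it is finite---no limiting argument is needed. You instead work from the more primitive directional-derivative formulation and therefore add a dominated-convergence step to pass from interior expectations to the boundary value. That step is sound as you set it up (the dominating series $\sum_x \log(x!)\,e^{(\curvpar_1^*+\delta)x}$ with $\curvpar_1^*+\delta<0$ works), but it is extra machinery that Brown's proposition lets you bypass. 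Citing Proposition~3.3 would let you stop after your first paragraph of computation.
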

\begin{proof}
  A necessary and sufficient condition for a non-regular exponential
  family to be steep is that
  \[\forall_{\curvpar\in \natcurvpars\setsub\natcurvpars\interior} \Eteg(\lVert \genstat(X)\rVert)=\infty,\]
  where $\natcurvpars\interior$ is the open interior of
  $\natcurvpars$, and their set difference is thus the non-open
  boundary of the natural parameter space that is contained within it.
  \citep[Proposition 3.3, p. 72]{brown1986fse} For CMP, this boundary
  \[\natcurvpars\setsub\natcurvpars\interior=\{\curvpar'\in\RR^2: \curvpar_2=0 \land \curvpar_1<0\}.\]

  There, $X\sim \Geometric(p=1-\myexp{\curvpar_1})$. Noting that $X\ge
  0$ a.s., $\log(X!)\ge 0$ a.s., and $\log(x!)\le
  (x+1)\log\left(\frac{x+1}{e}\right)+1$,
\begin{align*}
  \Eteg(\lVert \genstat(X)\rVert)&=\E_{\Geometric(p=1-\myexp{\curvpar_1})}(\rVert [X, \log(X!)]\t\rVert)\\
  & \le \E_{\Geometric(p=1-\myexp{\curvpar_1})}\left(X+\log(X!)\right)\\
  & \le \E_{\Geometric(p=1-\myexp{\curvpar_1})}\left(X+(X+1)\log\left(\frac{X+1}{e}\right)+1\right)\\
  & \le \E_{\Geometric(p=1-\myexp{\curvpar_1})}\left(X+ (X+1)^2 + 1\right)\\
  & < \infty,
\end{align*}
since the first and second moments of the geometric distribution are
finite. Therefore, CMP is not steep.
\end{proof}
Because the non-steep boundary corresponds to the most dispersed
distribution that CMP can represent, maximum likelihood estimator
properties for data which are highly overdispersed are not guaranteed.
\end{document}